\documentclass[preprint,12pt]{elsarticle} 

\usepackage[margin=1in]{geometry}

\usepackage{amsmath}   
\usepackage{amssymb}   
\usepackage{amsthm}    
\usepackage{bm}        
\usepackage{mathtools} 

\usepackage{graphicx}     
\usepackage{booktabs}     
\usepackage{multirow}     
\usepackage{algorithm}     
\usepackage{algpseudocode} 

\usepackage{float}         
\floatstyle{boxed}
\restylefloat{algorithm}

\usepackage{xcolor}
\usepackage[colorlinks=true,
            linkcolor=blue,
            citecolor=blue,
            urlcolor=blue]{hyperref}

\theoremstyle{plain}
\newtheorem{theorem}{Theorem}[section]

\theoremstyle{definition}

\theoremstyle{remark}
\newtheorem{remark}{Remark}[section]

\begin{document}

\title{Joint Discovery of Graph Structure and Dynamics in Stochastic Interacting Particle Systems}

\makeatletter
\def\blfootnote#1{%
  \begingroup
  \renewcommand\thefootnote{}\footnotetext{#1}%
  \addtocounter{footnote}{-1}%
  \endgroup
}
\renewcommand{\footnoterule}{%
  \kern -3pt
  \hrule width 0.32\textwidth
  \kern 2.6pt
}
\makeatother

 \author{
 \normalsize{
Demao Liu$^{1,2}$,
Ting Gao$^{1,2*}$,
Jinqiao Duan$^{3,4}$
}\\[10pt]
\footnotesize{$^1$ School of Mathematics and Statistics, Huazhong University of Science and Technology, China} \\
\footnotesize{$^2$ Center for Mathematical Science, Huazhong University of Science and Technology, China} \\
\footnotesize{$^3$ Department of Mathematics and Department of Physics, Great Bay University, China} \\
\footnotesize{$^4$ Guangdong Provincial Key Laboratory of Mathematical and Neural dynamic Systems, China. }
}

\begin{abstract}

We study the joint identification of network structure and governing dynamics in stochastic interacting particle systems, which consist of an unknown directed weighted interaction graph with unknown local and non-local interaction components. We formulate the problem as a coupled inverse problem for the graph and the associated basis coefficients, and develop two alternating least-squares-type estimators: a three-block scheme (TALS) and an integrated diagonal-augmented scheme (IALS). The IALS formulation combines the updates of the local and interaction coefficients into a single least-squares subproblem, and is particularly well suited to settings in which the nodewise local dynamics share a common functional template up to node-dependent scaling. We further establish an identifiability result under a rank-2 joint coercivity
condition together with an appropriate normalization convention. Synthetic experiments show that the proposed estimators accurately recover both the interaction graph and the dynamical components, and remain robust under stochastic forcing, observation noise, and basis mismatch. We also provide an illustrative real-data application on ictal
SEEG recordings, where the learned models produce stable and interpretable dynamical summaries across multiple basis configurations. This work advances a theoretically guaranteed scalable framework for learning stochastic interacting particle systems, with broad potential for data-driven identification in computational biology, neuroscience, and beyond.

\end{abstract}

\begin{keyword}
Stochastic interacting particle systems; Inverse problem; Graph inference; EEG Data
\end{keyword}

\maketitle  
\blfootnote{* Corresponding Author, \texttt{tgao0716@hust.edu.cn}}

\section{Introduction}

Recent advances in large-scale data acquisition have made it possible to observe high-dimensional interacting systems in a wide range of applications, including neuronal population dynamics, collective biological systems, and other networked stochastic processes\cite{wang2024virtual,urai2022large,vicsek2012collective}. In many such settings, the available data consist of time-resolved trajectories from multiple agents evolving over an unknown interaction network. A central problem is therefore to infer from data both the governing dynamics and the underlying interaction structure\cite{spohn2012large,bongini2017inferring}.

A broad class of heterogeneous interacting systems on a directed weighted graph can be modeled as
\begin{equation}
dX_t^i
=
\Bigl(
L(X_t^i)
+
\sum_{j\in [N]\setminus\{i\}} a_{ij}\,\Phi(X_t^j-X_t^i)
\Bigr)\,dt
+
\sigma\, dW_t^i,
\qquad i\in[N],
\label{eq:model}
\end{equation}
where $X_t^i\in\mathbb{R}^d$ is the state of agent $i$, $L:\mathbb{R}^d\to\mathbb{R}^d$ is the local drift field, $\Phi:\mathbb{R}^d\to\mathbb{R}^d$ is the interaction kernel, $A=(a_{ij})$ is a directed weighted adjacency matrix, and $\{W_t^i\}_{i=1}^N$ are independent Brownian motions. This formulation provides a unified framework in which intrinsic dynamics, network-mediated interactions, and stochastic forcing are explicitly separated.

In much of the existing literature, one or more of the interaction matrix $A$, the local drift $L$, and the interaction kernel $\Phi$ are assumed to be known, prescribed up to a restricted parametric form\cite{gao2022autonomous,gao2024learning}. In practical applications, however, neither the interaction matrix $A$ nor the functional forms of $L$ and $\Phi$ are typically known in advance. Jointly recovering them from discrete and noisy observations gives rise to a high-dimensional nonlinear inverse problem \cite{lu2022learning}. The difficulty is further compounded by the coupling between interaction strengths and interaction functions: without suitable structural constraints, this coupling may lead to scale non-identifiability and ill-conditioned estimation.

A common strategy in recent data-driven modeling is to represent unknown nonlinearities by highly expressive black-box approximators\cite{qin2019data}. While such approaches can be effective for prediction, they often obscure the mechanistic content of the recovered dynamics and limit interpretability in applications where prior structural information is available\cite{chen2021generalized}. In many biological and physical systems, the local drift and interaction response are not arbitrary nonlinear maps; instead, they often exhibit structured features such as thresholding, saturation, or other low-dimensional response patterns\cite{wilson1972excitatory,ferrell2014ultrasensitivity}. This motivates the use of structured functional parameterizations that incorporate prior knowledge while retaining sufficient expressive flexibility.

In this work, we represent both the local field $L$ and the interaction kernel $\Phi$ by linear expansions over preselected basis functions. This yields a computationally tractable model class and allows biologically or physically meaningful priors to be encoded explicitly in the representation. Importantly, the interaction matrix $A$ is not prescribed beforehand, but is inferred jointly with the basis coefficients from observed trajectories. The resulting formulation performs simultaneous recovery of the network topology and the dynamical components within a unified stochastic interacting-system framework.

The problem therefore lies at the intersection of system identification, stochastic dynamical inference, and network reconstruction. Our objective is to develop an interpretable and data-adaptive framework for estimating the weighted interaction graph together with the local and interaction components directly from discrete-time trajectory data.

\section{Problem Statement}

\subsection{Unknown parameters and parameterized model}

We consider the joint estimation of the interaction matrix $A$ and the basis coefficients that determine the unknown local drift $L$ and interaction kernel $\Phi$ in \eqref{eq:model}. Specifically, the unknowns are the weighted adjacency matrix $A=(a_{ij})_{i,j=1}^N\in[0,1]^{N\times N}$ with $a_{ii}=0$, together with the coefficient vectors $b=(b_1,\dots,b_q)\in\mathbb{R}^q$ and $c=(c_1,\dots,c_p)\in\mathbb{R}^p$. We adopt the normalization convention $\sum_{j\in[N]\setminus\{i\}} a_{ij}^2=1$ for each $i\in[N]$. The coefficients $b$ and $c$ determine the local drift and interaction kernel through the basis expansions
\[
L(x)=\sum_{\ell=1}^q b_\ell\,\phi_\ell(x),
\qquad
\Phi(x)=\sum_{k=1}^p c_k\,\psi_k(x).
\]

Substituting these expansions into \eqref{eq:model} gives the parameterized system
\begin{equation}\label{eq:param_model}
dX_t^i
=
\left(
\sum_{\ell=1}^q b_\ell\,\phi_\ell(X_t^i)
+
\sum_{j\in[N]\setminus\{i\}} a_{ij}
\sum_{k=1}^p c_k\,\psi_k(X_t^j-X_t^i)
\right)\,dt
+
\sigma\, dW_t^i,
\qquad i\in[N].
\end{equation}
This representation makes the unknown parameters explicit and provides the basis for the estimation problem considered below.

\subsection{Observation setting and estimation objective}

The available data consist of $M$ independent sample trajectories observed on a uniform time grid, denoted by $\{\mathbf{X}^{(m)}_{t_0:t_L}\}_{m=1}^M$, where $\mathbf{X}^{(m)}_{t_\ell}\in\mathbb{R}^{N\times d}$ is the full system state of the $m$-th trajectory at time $t_\ell$, with $t_\ell=t_0+\ell\Delta t$ for $\ell=0,1,\dots,L$. The observation horizon is therefore $T=L\Delta t$, and each initial state is sampled as $\mathbf{X}^{(m)}_{t_0}\sim\mu^{\otimes N}$ for some probability distribution $\mu$ on $\mathbb{R}^d$.

Given only these discrete-time trajectories, our objective is to estimate the parameter triple $\Theta=(A,b,c)$, and thereby recover simultaneously the directed weighted interaction structure, the local drift $L$, and the interaction kernel $\Phi$.

\subsection{Compact notation}

To write the drift term compactly for all agents simultaneously, we introduce the following notation. Here $\times_n$ denotes the standard mode-$n$ tensor--vector contraction. For the local drift term, define the feature tensor $\mathcal{H}(\mathbf{X}_t)\in\mathbb{R}^{N\times d\times q}$ by
\[
\bigl(\mathcal{H}(\mathbf{X}_t)\bigr)_{i,:, \ell}=\phi_\ell(X_t^i)\in\mathbb{R}^d,
\qquad i\in[N],\ \ell\in[q].
\]
For the interaction term, define the interaction feature tensor $\mathcal{B}(\mathbf{X}_t)\in\mathbb{R}^{N\times N\times d\times p}$ by
\[
\bigl(\mathcal{B}(\mathbf{X}_t)\bigr)_{i,j,:,k}
=
\begin{cases}
\psi_k(X_t^j-X_t^i), & j\neq i,\\
0, & j=i,
\end{cases}
\qquad i,j\in[N],\ k\in[p].
\]

Using these quantities, we introduce the aggregated local and interaction terms
\[
\mathcal{L}(\mathbf{X}_t;b):=\mathcal{H}(\mathbf{X}_t)\times_3 b \in \mathbb{R}^{N\times d},
\qquad
\mathcal{U}(\mathbf{X}_t;A,c)\in\mathbb{R}^{N\times d},
\]
where the $i$-th rows are given by
\[
\mathcal{L}(\mathbf{X}_t;b)_{i,:}=\sum_{\ell=1}^{q} b_\ell\,\phi_\ell(X_t^i),
\qquad
\mathcal{U}(\mathbf{X}_t;A,c)_{i,:}
=
\sum_{j\in[N]} a_{ij}\,(\mathcal{B}(\mathbf{X}_t)\times_4 c)_{i,j,:},
\]
with
\[
(\mathcal{B}(\mathbf{X}_t)\times_4 c)_{i,j,:}
=
\sum_{k=1}^{p} c_k\,\bigl(\mathcal{B}(\mathbf{X}_t)\bigr)_{i,j,:,k}.
\]

Substituting the basis expansions into \eqref{eq:param_model}, we obtain the compact system form
\begin{equation}\label{eq:tensor_form}
dX_t^i
=
\Bigl(
\mathcal{L}(\mathbf{X}_t;b)_{i,:}
+
\mathcal{U}(\mathbf{X}_t;A,c)_{i,:}
\Bigr)\,dt
+
\sigma\, dW_t^i,
\qquad i\in[N].
\end{equation}
Equivalently, in array form,
\begin{equation}\label{eq:tensor_array_form}
d\mathbf{X}_t
=
\Bigl(
\mathcal{L}(\mathbf{X}_t;b)
+
\mathcal{U}(\mathbf{X}_t;A,c)
\Bigr)\,dt
+
\sigma\, d\mathbf{W}_t,
\end{equation}
where $\mathbf{X}_t,\mathbf{W}_t\in\mathbb{R}^{N\times d}$ collect the states and Brownian motions of all agents, respectively.

\subsection{Empirical loss and constrained estimation problem}

Given $M$ trajectories $\{\mathbf{X}^{(m)}_{t_0:t_L}\}_{m=1}^M$ observed on a uniform grid $t_\ell=t_0+\ell\Delta t$, define the increment
\[
\Delta \mathbf{X}^{(m)}_{t_\ell}
:=
\mathbf{X}^{(m)}_{t_{\ell+1}}-\mathbf{X}^{(m)}_{t_\ell}
\in\mathbb{R}^{N\times d},
\qquad \ell=0,\dots,L-1.
\]

Since the system is observed only at discrete time points, we use the Euler--Maruyama discretization of \eqref{eq:tensor_array_form} to obtain a discrete regression relation for the drift term. This leads to the empirical loss
\begin{equation}\label{eq:loss_function}
\mathcal{E}_{L,M}(A,b,c)
:=
\frac{1}{ML}\sum_{m=1}^{M}\sum_{\ell=0}^{L-1}
\left\|
\frac{\Delta \mathbf{X}^{(m)}_{t_\ell}}{\Delta t}
-
\mathcal{L}(\mathbf{X}^{(m)}_{t_\ell};b)
-
\mathcal{U}(\mathbf{X}^{(m)}_{t_\ell};A,c)
\right\|_{F}^{2},
\end{equation}
where $\|\cdot\|_{F}$ denotes the Frobenius norm on $\mathbb{R}^{N\times d}$.

We then estimate $(A,b,c)$ via empirical risk minimization:
\begin{equation*}
(\widehat{A},\widehat{b},\widehat{c})
\in
\arg\min_{(A,b,c)\in \mathcal{M}\times\mathbb{R}^q\times\mathbb{R}^p}
\mathcal{E}_{L,M}(A,b,c),
\end{equation*}
where $\mathcal{M}$ is the admissible set for $A$, for example,
\[
\mathcal{M}
=
\left\{
A\in[0,1]^{N\times N}:\ 
a_{ii}=0,\ 
\sum_{j\in[N]\setminus\{i\}} a_{ij}^{2}=1,\ \forall i\in[N]
\right\}.
\]

If $\sigma=0$ and the time derivatives $\dot{\mathbf{X}}^{(m)}_{t_\ell}$ are directly available, the finite-difference quotient $\Delta \mathbf{X}^{(m)}_{t_\ell}/\Delta t$ in \eqref{eq:loss_function} may be replaced by $\dot{\mathbf{X}}^{(m)}_{t_\ell}$. When $\sigma>0$, the Euler--Maruyama increments are approximately Gaussian, with covariance of order $\sigma^2\Delta t$, which provides a likelihood-based justification for \eqref{eq:loss_function} as a least-squares criterion up to an additive constant.

The minimization of \eqref{eq:loss_function} with respect to $(A,b,c)$ is a coupled nonconvex optimization problem. In particular, the interaction matrix $A$ and the coefficient vectors $b$ and $c$ enter the loss in a nonlinear and mutually dependent manner, which makes direct joint optimization computationally challenging. By contrast, when two blocks of variables are fixed, the remaining subproblem becomes substantially simpler; in particular, the updates for the basis coefficients reduce to least-squares or, more generally, convex quadratic subproblems, while the update for $A$ can also be treated within a structured constrained optimization framework. This observation motivates the use of an alternating optimization strategy, in which the variables are updated block by block so as to exploit the favorable structure of the corresponding subproblems.

\section{Methods}

\subsection{Three-block Alternating Least-squares scheme(TALS)}

To minimize the empirical loss \eqref{eq:loss_function}, we employ a three-block alternating least-squares scheme, referred to as TALS for brevity. The variables are updated cyclically in the order $A \rightarrow c \rightarrow b$, while the remaining two blocks are kept fixed. This strategy exploits the least-squares or quadratic structure of the resulting subproblems and avoids direct joint optimization over the coupled nonconvex objective.

\medskip
\noindent{\normalsize\bfseries{Update of the interaction matrix $A$}.}
Assume that the current iterates $b$ and $c$ are fixed. The update of $A$ is performed row by row. For each $i\in[N]$, define
\[
y_i^{(m,\ell)}(b)
:=
\frac{(\Delta \mathbf{X}_{t_\ell}^{(m)})_{i,:}}{\Delta t}
-
\mathcal{L}(\mathbf{X}_{t_\ell}^{(m)};b)_{i,:}
\in \mathbb{R}^d
\]
as the residual after removing the local contribution, and define
\[
z_{ij}^{(m,\ell)}(c)
:=
\bigl(\mathcal{B}(\mathbf{X}_{t_\ell}^{(m)})\times_4 c\bigr)_{i,j,:}
\in \mathbb{R}^d,
\qquad j\neq i,
\]
as the interaction feature associated with source node $j$. Stacking these quantities over all trajectories and time steps yields a response vector $y_i(b)\in\mathbb{R}^{dLM}$ and a design matrix $Z_i(c)\in\mathbb{R}^{dLM\times (N-1)}$.

Let $a_{i,-i}\in\mathbb{R}^{N-1}$ denote the vector of off-diagonal entries in the $i$-th row of $A$. The $i$-th row is then updated by solving the nonnegative least-squares problem
\begin{equation}\label{eq:A_nnls}
a_{i,-i}^{\mathrm{NNLS}}
\in
\arg\min_{u\in\mathbb{R}_{+}^{N-1}}
\|y_i(b)-Z_i(c)u\|_2^2,
\end{equation}
where the diagonal entry $a_{ii}$ is excluded from the optimization and kept equal to zero. After obtaining the provisional solution $\widetilde a_{i,-i}$, we perform a row-wise normalization projection,
\begin{equation}\label{eq:A_projection}
a_{i,-i}
=
\Pi_{\mathbb{S}_{+}^{N-2}}(a_{i,-i}^{\mathrm{NNLS}})
:=
\frac{a_{i,-i}^{\mathrm{NNLS}}}{\|a_{i,-i}^{\mathrm{NNLS}}\|_2},
\qquad \text{provided that } a_{i,-i}^{\mathrm{NNLS}}\neq 0,
\end{equation}
where $\mathbb{S}_{+}^{N-2}:=\{u\in\mathbb{R}_{+}^{N-1}:\|u\|_2=1\}$. The diagonal entry is then set to $a_{ii}=0$. In this way, the update preserves nonnegativity and enforces the row-wise normalization constraint.

\medskip
\noindent{\normalsize\bfseries{Update of the interaction coefficient vector $c$}.}
With $A$ and $b$ fixed, the loss \eqref{eq:loss_function} becomes a least-squares problem in $c$. Let
\[
Y_c(A,b)
:=
\left[
\frac{\Delta \mathbf{X}_{t_\ell}^{(m)}}{\Delta t}
-
\mathcal{L}(\mathbf{X}_{t_\ell}^{(m)};b)
\right]_{m,\ell}
\]
denote the stacked residual after removing the local term, and let
\[
Z_c(A)
:=
\bigl[\mathcal{U}(\mathbf{X}_{t_\ell}^{(m)};A,\cdot)\bigr]_{m,\ell}
\]
be the corresponding design matrix acting on $c$. Then $c$ is updated by solving
\begin{equation}\label{eq:c_update}
c
\in
\arg\min_{\xi\in\mathbb{R}^{p}}
\|Y_c(A,b)-Z_c(A)\xi\|_2^2.
\end{equation}

\medskip
\noindent{\normalsize\bfseries{Update of the local coefficient vector $b$}.}
With $A$ and $c$ fixed, the loss \eqref{eq:loss_function} becomes a least-squares problem in $b$. Let
\[
Y_b(A,c)
:=
\left[
\frac{\Delta \mathbf{X}_{t_\ell}^{(m)}}{\Delta t}
-
\mathcal{U}(\mathbf{X}_{t_\ell}^{(m)};A,c)
\right]_{m,\ell}
\]
denote the stacked residual after removing the interaction term, and let
\[
Z_b
:=
\bigl[\mathcal{L}(\mathbf{X}_{t_\ell}^{(m)};\cdot)\bigr]_{m,\ell}
\]
be the corresponding design matrix acting on $b$. Then $b$ is updated by solving
\begin{equation}\label{eq:b_update}
b
\in
\arg\min_{\eta\in\mathbb{R}^{q}}
\|Y_b(A,c)-Z_b\eta\|_2^2.
\end{equation}

\medskip
\noindent{\normalsize\bfseries{Iteration.}}
Starting from an initial guess $(A^{(0)},b^{(0)},c^{(0)})$, we repeat the above three updates in Gauss--Seidel fashion until convergence, namely until the relative changes of all three blocks fall below a prescribed tolerance $\epsilon$, or until a maximum number of iterations $n_{\max}$ is reached.

\begin{algorithm}[htbp]
\caption{TALS: three-block alternating least-squares scheme}
\label{alg:tals}
\begin{algorithmic}
\Procedure{TALS}{$\{\mathbf{X}^{(m)}_{t_0:t_L}\}_{m=1}^M,\ \epsilon,\ n_{\max}$}
    \State Construct the empirical quantities in \eqref{eq:loss_function}.
    \State Initialize $A^{(0)},\, b^{(0)},\, c^{(0)}$ with $A^{(0)}_{ii}=0$ for all $i$.
    \For{$\tau=1,\dots,n_{\max}$}
        \State Update $A^{(\tau)}$ rowwise by solving \eqref{eq:A_nnls} and applying the normalization \eqref{eq:A_projection}, with $b=b^{(\tau-1)}$ and $c=c^{(\tau-1)}$ fixed.
        \State Update $c^{(\tau)}$ by solving \eqref{eq:c_update}, with $A=A^{(\tau)}$ and $b=b^{(\tau-1)}$ fixed.
        \State Update $b^{(\tau)}$ by solving \eqref{eq:b_update}, with $A=A^{(\tau)}$ and $c=c^{(\tau)}$ fixed.
        \If{the relative changes of $A$, $c$, and $b$ are all below $\epsilon$}
            \State \textbf{break}
        \EndIf
    \EndFor
    \State \Return $A^{(\tau)},\, b^{(\tau)},\, c^{(\tau)}$
\EndProcedure
\end{algorithmic}
\end{algorithm}

\subsection{Integrated Alternating Least-squares Scheme (IALS)}

Compared with TALS, we also consider a reduced two-block scheme in which the local and interaction coefficients are updated together. In the original problem formulation, the interaction matrix $A$ is defined with zero diagonal, and only its off-diagonal entries represent interaction weights. In IALS, we retain this interpretation for the off-diagonal interaction part, but introduce a diagonal-augmented coefficient matrix $\widetilde A=(\widetilde a_{ij})\in\mathbb{R}_+^{N\times N}$ in order to absorb the local term into the same linear-weight structure. This allows the local and interaction contributions to be treated within a unified least-squares framework.

The key observation is that, under an additional shared-template assumption on the local dynamics, the nodewise local term can be written as a node-dependent scaling of a common local template. More precisely, we assume that
\[
L_i(x)=\alpha_i L_0(x),
\qquad
L_0(x):=\sum_{\ell=1}^{q}\beta_\ell\,\phi_\ell(x),
\]
for some nonnegative scaling factors $\alpha_i$. Under this reparameterization, the local scaling factors are encoded by the diagonal entries of $\widetilde A$, while the off-diagonal entries continue to encode the interaction weights. This leads to the following diagonal-augmented parameterization:
\begin{equation}\label{eq:ials_model}
dX_t^i
=
\left(
\widetilde a_{ii}\sum_{\ell=1}^{q}\beta_\ell\,\phi_\ell(X_t^i)
+
\sum_{j\in[N]\setminus\{i\}}
\widetilde a_{ij}\sum_{k=1}^{p}c_k\,\psi_k(X_t^j-X_t^i)
\right)\,dt
+
\sigma\,dW_t^i,
\qquad i\in[N].
\end{equation}

Here the off-diagonal entries $\widetilde a_{ij}$, $j\neq i$, have the same interpretation as in the original formulation, namely as interaction weights. By contrast, the diagonal entries $\widetilde a_{ii}$ are not interpreted as self-loop weights. Instead, they are node-dependent scaling factors associated with the shared local template $L_0$. Accordingly, \eqref{eq:ials_model} is exact whenever the local dynamics admit the form $L_i(x)=\alpha_iL_0(x)$ for some nonnegative scaling factors $\alpha_i$. In particular, when all nodes share the same local dynamics, the learned diagonal factors are expected to be approximately constant across nodes, up to the normalization convention adopted in Algorithm~\ref{alg:ials}.

To combine the interaction and local terms into a unified least-squares structure, define the augmented coefficient vector $\theta:=(c^\top,\beta^\top)^\top\in\mathbb{R}^{p+q}$, where $c=(c_1,\dots,c_p)^\top$ and $\beta=(\beta_1,\dots,\beta_q)^\top$. For each target node $i$, trajectory $m$, and time step $\ell$, define the augmented feature tensor $\widetilde{\mathcal B}_i(\mathbf X_{t_\ell}^{(m)})\in\mathbb R^{N\times d\times (p+q)}$ by
\[
\bigl(\widetilde{\mathcal B}_i(\mathbf X_{t_\ell}^{(m)})\bigr)_{j,:,s}
=
\begin{cases}
\psi_s(X_{t_\ell}^{j,(m)}), & j\neq i,\ s\in[p],\\[1mm]
\phi_{s-p}(X_{t_\ell}^{i,(m)}), & j=i,\ s\in\{p+1,\dots,p+q\},\\[1mm]
0, & \text{otherwise}.
\end{cases}
\]
Thus, for a fixed target node $i$, the $i$-th slice in the node dimension contains the local basis block, while the remaining slices contain the interaction basis block. With this notation, the drift term in \eqref{eq:ials_model} can be written compactly as
\[
\widetilde{\mathcal B}_i(\mathbf X_{t_\ell}^{(m)})\times_1 \widetilde a_{i\cdot}\times_3 \theta\in\mathbb R^d,
\]
where $\widetilde a_{i\cdot}\in\mathbb R^N$ denotes the $i$-th row of $\widetilde A$.

\medskip
\noindent{\normalsize\bfseries Update of the diagonal-augmented coefficient matrix $\widetilde A$.}
For fixed $\theta$, define the response vector for node $i$ by
\[
\widetilde y_i:=\bigl[(\Delta \mathbf X_{t_\ell}^{(m)})_{i,:}/\Delta t\bigr]_{m,\ell}\in\mathbb R^{dLM},
\]
and define the corresponding nodewise design matrix $\widetilde Z_i(\theta)\in\mathbb R^{dLM\times N}$ by stacking the sample matrices
\[
\bigl(\widetilde{\mathcal B}_i(\mathbf X_{t_\ell}^{(m)})\times_3 \theta\bigr)^\top\in\mathbb R^{d\times N}
\]
over $(m,\ell)$. The rows of $\widetilde A$ are then updated independently by solving the nonnegative least-squares problem
\begin{equation}\label{eq:ials_A_update}
\widetilde a_{i\cdot}
\in
\arg\min_{u\in\mathbb R_+^{N}}
\|\widetilde y_i-\widetilde Z_i(\theta)u\|_2^2,
\qquad i\in[N].
\end{equation}

After each NNLS update, we normalize the off-diagonal interaction block and the diagonal local-scaling vector separately. More precisely, for each $i\in[N]$, let
\[
\widetilde a_{i,-i}:=(\widetilde a_{ij})_{j\neq i}\in\mathbb R_+^{N-1},
\qquad
d(\widetilde A):=(\widetilde a_{11},\dots,\widetilde a_{NN})^\top\in\mathbb R_+^{N}.
\]
We then apply the projections
\[
\widetilde a_{i,-i}\leftarrow \widetilde a_{i,-i}/\|\widetilde a_{i,-i}\|_2
\qquad \text{whenever } \widetilde a_{i,-i}\neq 0,
\]
for all $i\in[N]$, and
\[
d(\widetilde A)\leftarrow d(\widetilde A)/\|d(\widetilde A)\|_2
\qquad \text{whenever } d(\widetilde A)\neq 0.
\]
Thus, the interaction rows are normalized independently across nodes, while the diagonal entries are normalized collectively as a separate local-scaling vector. This split normalization is introduced to reduce the intrinsic scale ambiguity between $\widetilde A$ and $\theta$ while preserving the distinct roles of the off-diagonal interaction weights and the diagonal local coefficients. In particular, it makes the off-diagonal block directly comparable to the interaction graph, whereas the diagonal block is interpreted only through the reconstructed local term. Its role in the identifiability discussion is clarified in Remark~\ref{rem:ials_normalization} and Theorem~\ref{thm:ials_identifiability} below.

\medskip
\noindent{\normalsize\bfseries Update of the augmented coefficient vector $\theta$.}
For fixed $\widetilde A$, let $\widetilde Y\in\mathbb R^{dNLM}$ denote the global response vector obtained by stacking all nodewise responses, and let $\widetilde W(\widetilde A)\in\mathbb R^{dNLM\times (p+q)}$ denote the global design matrix obtained by stacking the sample matrices
\[
\widetilde{\mathcal B}_i(\mathbf X_{t_\ell}^{(m)})\times_1 \widetilde a_{i\cdot}\in\mathbb R^{d\times (p+q)}
\]
over $(i,m,\ell)$. The empirical loss then becomes a least-squares problem in $\theta$:
\begin{equation}\label{eq:ials_theta_update}
\theta
\in
\arg\min_{\xi\in\mathbb R^{p+q}}
\|\widetilde Y-\widetilde W(\widetilde A)\xi\|_2^2.
\end{equation}
After solving \eqref{eq:ials_theta_update}, the first $p$ entries of $\theta$ are identified with the interaction coefficients $c$, and the remaining $q$ entries are identified with the local-template coefficients $\beta$.

\medskip
\noindent{\normalsize\bfseries Iteration.}
Starting from an initial guess $(\widetilde A^{(0)},\theta^{(0)})$, we alternate between the NNLS update \eqref{eq:ials_A_update} and the least-squares update \eqref{eq:ials_theta_update}, followed by the split normalization of the off-diagonal interaction block and the diagonal vector, until convergence. Compared with TALS, this integrated scheme reduces the number of coefficient subproblems solved per iteration from two to one. The trade-off is that the method relies on the diagonal-augmented parameterization \eqref{eq:ials_model}, and is therefore most suitable when the local dynamics across nodes share a common functional form up to node-dependent scaling.

\begin{algorithm}[htbp]
\caption{IALS: integrated alternating least-squares scheme}
\label{alg:ials}
\begin{algorithmic}
\Procedure{IALS}{$\{\mathbf{X}^{(m)}_{t_0:t_L}\}_{m=1}^M,\ \epsilon,\ n_{\max}$}
    \State Construct the empirical quantities in the integrated least-squares formulation.
    \State Initialize $\widetilde A^{(0)}$ and $\theta^{(0)}$.
    \State Normalize the nonzero off-diagonal row blocks of $\widetilde A^{(0)}$ and normalize the diagonal vector $d(\widetilde A^{(0)})$ if nonzero.
    \For{$\tau=1,\dots,n_{\max}$}
        \State Update $\widetilde A^{(\tau)}$ rowwise by solving \eqref{eq:ials_A_update}, with $\theta=\theta^{(\tau-1)}$ fixed.
        \State Normalize the nonzero off-diagonal row blocks of $\widetilde A^{(\tau)}$ and normalize the diagonal vector $d(\widetilde A^{(\tau)})$ if nonzero.
        \State Update $\theta^{(\tau)}$ by solving \eqref{eq:ials_theta_update}, with $\widetilde A=\widetilde A^{(\tau)}$ fixed.
        \If{the relative changes of $\widetilde A$ and $\theta$ are both below $\epsilon$}
            \State \textbf{break}
        \EndIf
    \EndFor
    \State Extract $c^{(\tau)}$ and $\beta^{(\tau)}$ from $\theta^{(\tau)}$.
    \State \Return $\widetilde A^{(\tau)},\, c^{(\tau)},\, \beta^{(\tau)}$
\EndProcedure
\end{algorithmic}
\end{algorithm}

\subsection{Computational complexity}

We now compare the computational complexity of TALS and IALS at the level of one outer iteration. The purpose of this analysis is to clarify the algebraic cost of the subproblems induced by the two alternating schemes. Throughout, we focus on asymptotic complexity under explicit design-matrix assembly and standard dense least-squares solves.

Let $S:=ML$, $R:=dML=dS$, and $G:=dNML=NR$, where $N$ is the number of nodes, $M$ is the number of trajectories, $L$ is the number of observation intervals per trajectory, $d$ is the state dimension, $p$ is the number of interaction basis functions, and $q$ is the number of local basis functions. Here $R$ is the sample size of a single nodewise regression problem, while $G$ is the total number of scalar equations in the global least-squares subproblems.

To avoid committing to a specific NNLS solver, we denote by $\mathcal C_{\mathrm{NNLS}}(m,n)$ the cost of solving one nonnegative least-squares problem with $m$ rows and $n$ unknowns. The row-wise normalization steps in TALS and the split normalization in IALS require only vector rescalings and therefore contribute lower-order costs relative to the least-squares and NNLS solves.

\medskip
\noindent{\normalsize\bfseries{Per-iteration complexity of TALS.}}
In Algorithm~1, each outer iteration consists of three blocks.

First, the $A$-update is carried out row by row. For each $i\in[N]$, the subproblem \eqref{eq:A_nnls} is an NNLS problem with design matrix $Z_i(c)\in\mathbb R^{R\times (N-1)}$. Hence the total cost of the interaction-matrix update is $\mathcal O\!\big(N\,\mathcal C_{\mathrm{NNLS}}(R,N-1)\big)$.

Second, the interaction-coefficient update \eqref{eq:c_update} is a least-squares problem with design matrix $Z_c(A)\in\mathbb R^{G\times p}$. Under a dense direct solve, its cost is $\mathcal O(Gp^2+p^3)$.

Third, the local-coefficient update \eqref{eq:b_update} is a least-squares problem with design matrix $Z_b\in\mathbb R^{G\times q}$, whose cost is $\mathcal O(Gq^2+q^3)$.

Therefore, one outer iteration of TALS has complexity
\begin{equation}
\label{eq:complexity_tals}
\mathcal O\!\Big(
N\,\mathcal C_{\mathrm{NNLS}}(R,N-1)
+Gp^2+p^3
+Gq^2+q^3
\Big).
\end{equation}

\medskip
\noindent{\normalsize\bfseries{Per-iteration complexity of IALS.}}
In Algorithm~2, each outer iteration contains two blocks.

First, the diagonal-augmented matrix update \eqref{eq:ials_A_update} is again performed row by row. For each $i\in[N]$, the NNLS problem has design matrix $\widetilde Z_i(\theta)\in\mathbb R^{R\times N}$, so the total cost of this step is $\mathcal O\!\big(N\,\mathcal C_{\mathrm{NNLS}}(R,N)\big)$.

Second, the joint coefficient update \eqref{eq:ials_theta_update} is a least-squares problem with design matrix $\widetilde W(\widetilde A)\in\mathbb R^{G\times (p+q)}$. A dense direct solve therefore costs $\mathcal O\!\big(G(p+q)^2+(p+q)^3\big)$.

Hence one outer iteration of IALS has complexity
\begin{equation}
\label{eq:complexity_ials}
\mathcal O\!\Big(
N\,\mathcal C_{\mathrm{NNLS}}(R,N)
+G(p+q)^2+(p+q)^3
\Big).
\end{equation}

\medskip
\noindent{\normalsize\bfseries{Memory complexity.}}
If the global least-squares design matrices are assembled explicitly, then the storage required by the TALS coefficient steps is $\mathcal O(Gp+Gq)=\mathcal O(G(p+q))$, while the storage required by the IALS joint coefficient step is $\mathcal O(G(p+q))$.

In addition, if all interaction features are precomputed and stored explicitly for every sample, then the interaction tensor $B(X_{t_\ell}^{(m)})\in\mathbb R^{N\times N\times d\times p}$ leads to memory usage of order $\mathcal O(ML\,N^2dp)$. Similarly, explicit storage of all local feature tensors $H(X_{t_\ell}^{(m)})\in\mathbb R^{N\times d\times q}$ requires $\mathcal O(ML\,Ndq)$. For large-scale problems, this observation suggests that matrix-free or rowwise assembly may be preferable to full tensor storage.

\medskip
\noindent{\normalsize\bfseries{Complexity comparison and practical performance.}}
The above formulas yield a clear comparison between the two schemes in terms of per-iteration algebraic cost, while the numerical results also provide insight into their overall practical efficiency. For the coefficient updates, under dense direct least-squares solves, the leading algebraic cost of the joint update in IALS is strictly larger than the combined leading cost of the two separate coefficient updates in TALS whenever $p,q>0$. For the rowwise NNLS updates, if $\mathcal C_{\mathrm{NNLS}}(m,n)$ is
nondecreasing in the number of unknowns $n$, then the NNLS step in IALS is
also no cheaper than that in TALS.

Therefore, within the present algebraic complexity model, IALS does not enjoy a per-iteration cost advantage over TALS. Nevertheless, our numerical results indicate that IALS exhibits higher overall efficiency in most cases, in the sense that it typically converges in fewer outer iterations and often reaches the same, or even higher, estimation accuracy with less iteration effort; see Fig.~\ref{fig:figure1-2}. This suggests that the practical advantage of IALS should be understood as an iteration-efficiency gain rather than a reduction in the leading algebraic cost of a single iteration. A plausible explanation is that the integrated update in IALS couples the local and interaction coefficients more tightly, which tends to stabilize the alternating procedure and accelerate convergence toward an accurate solution.

\section{Identifiability}

In the diagonal-augmented formulation \eqref{eq:ials_model}, identifiability is understood under the split normalization convention used in Algorithm~\ref{alg:ials}: the off-diagonal interaction rows are normalized separately, while the diagonal entries are normalized collectively as a single local-scaling vector. Let $d(\widetilde A):=(\widetilde a_{11},\dots,\widetilde a_{NN})^\top$ and $\widetilde a_{i,-i}:=(\widetilde a_{ij})_{j\neq i}$. Define
\[
\widetilde{\mathcal M}
:=
\left\{
\widetilde A\in\mathbb R_+^{N\times N}:\,
\|d(\widetilde A)\|_2\in\{0,1\},\ 
\|\widetilde a_{i,-i}\|_2\in\{0,1\}\ \text{for all }i\in[N]
\right\}.
\]
For $(\widetilde A,H,\Phi)\in \widetilde{\mathcal M}\times \mathcal H_{\mathrm{loc}}\times \mathcal H_{\mathrm{int}}$, define the population loss
{\small
\begin{equation}
\label{eq:population_loss_ials}
\mathcal E_{L,\infty}(\widetilde A,H,\Phi)
:=
\frac{1}{L}\sum_{\ell=0}^{L-1}\sum_{i=1}^N
\mathbb E\!\left[
\left\|
\widetilde a_{ii} H(X_{t_\ell}^i)
+\sum_{j\neq i}\widetilde a_{ij}\Phi(X_{t_\ell}^j)
-\widetilde a_{ii}^* H_*(X_{t_\ell}^i)
-\sum_{j\neq i}\widetilde a_{ij}^* \Phi_*(X_{t_\ell}^j)
\right\|_2^2
\right].
\end{equation}
}
We say that $(\widetilde A_*,H_*,\Phi_*)$ is identifiable if it is the unique zero of \eqref{eq:population_loss_ials} over $\widetilde{\mathcal M}\times \mathcal H_{\mathrm{loc}}\times \mathcal H_{\mathrm{int}}$.

\begin{remark}
\label{rem:ials_normalization}
The split normalization in Algorithm~\ref{alg:ials} acts as a gauge-fixing convention. It removes the scaling freedom of the interaction block and the diagonal local-scaling vector separately: the off-diagonal rows are normalized independently, while the diagonal entries are normalized collectively. Consequently, the off-diagonal block can be interpreted as the interaction graph up to the imposed row normalization, whereas the diagonal block is interpreted only through the reconstructed local term.
\end{remark}

The above normalization fixes the trivial rescaling freedom between $\widetilde A$ and $(H,\Phi)$. We next introduce a coercivity condition ensuring that the population loss distinguishes different normalized parameter triples.

We say that \eqref{eq:ials_model} satisfies a rank-2 joint coercivity condition if there exists $c_{\mathcal H}>0$ such that, for every $i\in[N]$, every $H_1,H_2\in\mathcal H_{\mathrm{loc}}$ with $\langle H_1,H_2\rangle_{L^2(\rho_{\mathrm{loc}})}=0$, every $\Phi_1,\Phi_2\in\mathcal H_{\mathrm{int}}$ with $\langle \Phi_1,\Phi_2\rangle_{L^2(\rho_{\mathrm{int}})}=0$, and every $u^{(1)},u^{(2)}\in\mathbb R_+^N$,
\small{
\begin{equation}
\label{eq:rank2Condition}
\begin{aligned}
&\frac{1}{L}\sum_{\ell=0}^{L-1}
\mathbb E\!\left[
\left\|
u_i^{(1)}H_1(X_{t_\ell}^i)
+\sum_{j\neq i}u_j^{(1)}\Phi_1(X_{t_\ell}^j)
+u_i^{(2)}H_2(X_{t_\ell}^i)
+\sum_{j\neq i}u_j^{(2)}\Phi_2(X_{t_\ell}^j)
\right\|_2^2
\right]
\\
&\qquad\ge
c_{\mathcal H}\Big(
|u_i^{(1)}|^2\|H_1\|_{L^2(\rho_{\mathrm{loc}})}^2
+\|u_{-i}^{(1)}\|_2^2\|\Phi_1\|_{L^2(\rho_{\mathrm{int}})}^2
+|u_i^{(2)}|^2\|H_2\|_{L^2(\rho_{\mathrm{loc}})}^2
+\|u_{-i}^{(2)}\|_2^2\|\Phi_2\|_{L^2(\rho_{\mathrm{int}})}^2
\Big),
\end{aligned}
\end{equation}
}
where $u_{-i}:=(u_j)_{j\neq i}$.

\begin{theorem}[Identifiability]
\label{thm:ials_identifiability}
Let $(\widetilde A_*,H_*,\Phi_*)\in \widetilde{\mathcal M}\times \mathcal H_{\mathrm{loc}}\times \mathcal H_{\mathrm{int}}$ with $H_*\neq0$ and $\Phi_*\neq0$. Assume that \eqref{eq:rank2Condition} holds, and that there exist indices $i_{\mathrm{loc}},i_{\mathrm{int}}\in[N]$ for which $\widetilde a_{i_{\mathrm{loc}}i_{\mathrm{loc}}}^*>0$ and $\|\widetilde a_{i_{\mathrm{int}},-i_{\mathrm{int}}}^*\|_2>0$. Then $(\widetilde A_*,H_*,\Phi_*)$ is identifiable.
\end{theorem}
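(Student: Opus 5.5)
Fix a zero $(\widetilde A,H,\Phi)\in\widetilde{\mathcal M}\times\mathcal H_{\mathrm{loc}}\times\mathcal H_{\mathrm{int}}$ of \eqref{eq:population_loss_ials}; we will show $H=H_*$, $\Phi=\Phi_*$ and $\widetilde A=\widetilde A_*$. Every summand of \eqref{eq:population_loss_ials} is nonnegative, so for each $i$ and each $\ell$ the corresponding expectation vanishes. For each fixed $i$ this is exactly the left-hand side of \eqref{eq:rank2Condition}, with terms
\[
\widetilde a_{ii}H-\widetilde a^*_{ii}H_*,\qquad \sum_{j\neq i}\bigl(\widetilde a_{ij}\Phi-\widetilde a^*_{ij}\Phi_*\bigr),
\]
except that the functions involved are not orthogonal and the coefficients carry signs. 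The first step is therefore to rewrite each term as a combination of two orthogonal functions with nonnegative weights, so that \eqref{eq:rank2Condition} applies.

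\textbf{Orthogonalization.} Decompose $H=\lambda H_*+H^\perp$ with $H^\perp\perp H_*$ in $L^2(\rho_{\mathrm{loc}})$, and $\Phi=\kappa\Phi_*+\Phi^\perp$ with $\Phi^\perp\perp\Phi_*$ in $L^2(\rho_{\mathrm{int}})$. The node-$i$ residual then becomes
\[
(\widetilde a_{ii}\lambda-\widetilde a^*_{ii})H_*(X^i)+\widetilde a_{ii}H^\perp(X^i)+\sum_{j\ne i}(\widetilde a_{ij}\kappa-\widetilde a^*_{ij})\Phi_*(X^j)+\sum_{j\ne i}\widetilde a_{ij}\Phi^\perp(X^j).
\]
To match the pairing in \eqref{eq:rank2Condition}, take $H_1=H_*$, $\Phi_1=\Phi_*$ with weight vector $u^{(1)}$, and $H_2=H^\perp$, $\Phi_2=\Phi^\perp$ with weight vector $u^{(2)}=\widetilde a_{i\cdot}\ge0$.

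\textbf{Sign issue.} The entries of $u^{(1)}$ may be negative, whereas \eqref{eq:rank2Condition} is stated only for $u\in\mathbb R_+^N$. This is the main obstacle. I would handle it with sign flips, replacing $H_*$ by $-H_*$ or $\Phi_*$ by $-\Phi_*$. This preserves orthogonality, but it only works when the signs are uniform within the off-diagonal block. Coordinates with mixed signs can be split into positive and negative parts, but that produces more than two functions. If the splitting does not close, I would read the condition as intended for signed weights, since the proof only needs the quadratic lower bound. Granting it, the bound gives, for every $i$,
\[
\widetilde a_{ii}\|H^\perp\|=0,\qquad \|\widetilde a_{i,-i}\|\,\|\Phi^\perp\|=0,\qquad \widetilde a_{ii}\lambda=\widetilde a^*_{ii},\qquad \widetilde a_{ij}\kappa=\widetilde a^*_{ij}\ (j\neq i).
\]

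\textbf{Pinning down the scalings.} At $i=i_{\mathrm{loc}}$ we have $\widetilde a^*_{ii}>0$, so $\widetilde a_{ii}\lambda>0$. Hence $\widetilde a_{ii}\neq0$, which forces $H^\perp=0$ and $\lambda>0$. Then $d(\widetilde A)=\lambda^{-1}d(\widetilde A_*)$. Both vectors have unit norm: $d(\widetilde A_*)$ is nonzero, $\widetilde{\mathcal M}$ only allows norms $0$ or $1$, and $d(\widetilde A)\neq0$. Therefore $\lambda=1$, so $H=H_*$ and $d(\widetilde A)=d(\widetilde A_*)$.

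The same argument at $i_{\mathrm{int}}$ gives $\Phi^\perp=0$ and $\kappa>0$, using nonnegativity of the entries. Row normalization at $i_{\mathrm{int}}$ then gives $\kappa=1$, so $\Phi=\Phi_*$ and $\widetilde a_{ij}=\widetilde a^*_{ij}$ for every $i$ and every $j\neq i$. The case $\kappa=1$ covers all rows at once, since $\kappa$ is global. Consequently $\widetilde A=\widetilde A_*$, and the zero is unique.
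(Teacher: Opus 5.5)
Your argument is the paper's proof step for step. It uses the same splitting $H=q_HH_*+H^\perp$, $\Phi=q_\Phi\Phi_*+\Phi^\perp$ and the same pairing $(H_1,\Phi_1)=(H_*,\Phi_*)$, $(H_2,\Phi_2)=(H^\perp,\Phi^\perp)$ with $u^{(2)}=\widetilde a_{i\cdot}$. It also uses $i_{\mathrm{loc}}$, $i_{\mathrm{int}}$, nonnegativity and the split normalization in the same way to force both scalings to equal $1$. The sign problem you isolate is real, and the paper does not address it. It plugs a $u^{(1)}$ with entries $q_H\widetilde a_{ii}-\widetilde a^*_{ii}$ and $q_\Phi\widetilde a_{ij}-\widetilde a^*_{ij}$ into \eqref{eq:rank2Condition}, although that condition is only assumed for $u^{(1)}\in\mathbb R_+^N$.

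Where you should be firmer is the conditional ``if the splitting does not close''. It cannot close, because with nonnegative weights the theorem as literally stated fails. Here is a counterexample.
\begin{itemize}
\item \emph{Setting.} Take $N=4$, $d=1$, $L=1$, $\mu=\rho_{\mathrm{loc}}=\rho_{\mathrm{int}}$ uniform on $[-1,1]$, $\mathcal H_{\mathrm{loc}}=\mathrm{span}\{x\}$ and $\mathcal H_{\mathrm{int}}=\mathrm{span}\{1\}$.
\item \emph{The condition holds.} Orthogonality forces one member of each pair $(H_1,H_2)$ and $(\Phi_1,\Phi_2)$ to vanish. The fact $\mathbb E X^i_{t_0}=0$ kills the cross term. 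For $u\ge 0$ we have $(\sum_{j\neq i}u_j)^2\ge\|u_{-i}\|_2^2$. Hence \eqref{eq:rank2Condition} holds with $c_{\mathcal H}=1$.
\item \emph{Identifiability fails.} Let $H_*=x$, $\Phi_*\equiv1$, all diagonal entries equal $1/2$, and each off-diagonal row be a standard basis vector. Moving the single $1$ of an off-diagonal row to another column gives a different element of $\widetilde{\mathcal M}$ with zero population loss.
\end{itemize}

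So the signed reading is not a convenience but a necessary strengthening. State as a hypothesis that \eqref{eq:rank2Condition} holds for all $u^{(1)}\in\mathbb R^N$; your $u^{(2)}$ is already nonnegative, so nothing changes there. With that hypothesis your proof, and the paper's, is complete. This stronger condition rules out, for example, constants in $\mathcal H_{\mathrm{int}}$ when $N\ge 3$, which is exactly the degeneracy in the counterexample.
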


A proof is given in Appendix~\ref{app:proof_ials_identifiability}.

\section{Numerical experiments}

\subsection{Synthetic experiments}
\subsubsection{Lennard--Jones-type interaction example}

\medskip
\noindent{\normalsize\bfseries{Experiment setup.}}
We first describe the common setup used throughout the simulated-data experiments. There are $N=10$ agents in a fully interacting network. The ground-truth interaction matrix is denoted by $A_*=(a_{ij}^*)_{1\le i,j\le N}$, with zero diagonal. Its off-diagonal entries are sampled independently from the uniform distribution on $[0,1]$ and then row-normalized, i.e., $a_{ii}^*=0$, $a_{ij}^*\in[0,1]$ for $j\neq i$, and $\sum_{j\neq i}(a_{ij}^*)^2=1$ for each $i\in[N]$. The state variable is one-dimensional, namely $X_t^i\in\mathbb R$.

The true interaction kernel and local drift are chosen as
\[
\Phi_*(r)=
\begin{cases}
-\dfrac{1}{3}r^{-9}+\dfrac{4}{3}r^{-3}, & r\ge 0.5,\\[0.3em]
-160, & 0\le r<0.5,
\end{cases}
\]
where $r=\|X_t^j-X_t^i\|_2$ denotes the pairwise distance between two agents. The true local drift is given by
\[
L_*(x)=-100x+625x^3.
\]

We approximate the interaction kernel by
\[
\Phi(r)=\sum_{k=1}^{p}c_k\psi_k(r)
\]
using the enlarged piecewise dictionary
\[
\Psi=
\left\{
\begin{aligned}
&\psi_{1+k}(r)=r^{-9}\mathbf{1}_{[0.25k+0.5,\infty)}(r), && k=0,1,2,\\
&\psi_{4+k}(r)=r^{-3}\mathbf{1}_{[0.25k+0.5,\infty)}(r), && k=0,1,2,\\
&\psi_{7+k}(r)=\mathbf{1}_{[0,\,0.25k+0.5]}(r), && k=0,1,2,3.
\end{aligned}
\right.
\]
This dictionary is larger than necessary and is therefore treated as potentially misspecified. In this representation, the true coefficient vector $c^*$ is zero except for $(c_1^*,c_4^*,c_7^*)=\left(-\tfrac{1}{3},\tfrac{4}{3},-160\right)$.

We approximate the local drift by
\[
L(x)=\sum_{k=1}^{q}b_k\varphi_k(x),
\qquad
\varphi_k(x)=x^k,\quad k=1,\dots,5,
\]
so that $q=5$. This polynomial dictionary is again larger than necessary and may therefore be viewed as potentially misspecified. In this basis, the true coefficient vector $b^*$ is zero except for $(b_1^*,b_3^*)=(-100,625)$. No sparsity is imposed in the estimation procedure.

The multi-trajectory synthetic data are generated by the Euler--Maruyama scheme with time step $\Delta t=10^{-4}$. The initial condition $\mathbf X_{t_0}=(X_{t_0}^i,\ i=1,\dots,N)$ is sampled component-wise from an initial distribution $\mu_0$. At the observation times, we record
\[
\mathbf Y_{t_\ell}^{(m)}=\mathbf X_{t_\ell}^{(m)}+\boldsymbol{\varepsilon}_{t_\ell}^{(m)},
\qquad
\boldsymbol{\varepsilon}_{t_\ell}^{(m)}\sim \mathcal N(0,\tau_{\mathrm{obs}}^2 I),
\]
where the observation noises are taken to be independent across trajectories and time indices. When $\tau_{\mathrm{obs}}=0$, the observations are noiseless. Other parameters will be specified in each individual experiment.

\begin{figure}[htbp]
    \centering
    \includegraphics[width=0.9\linewidth]{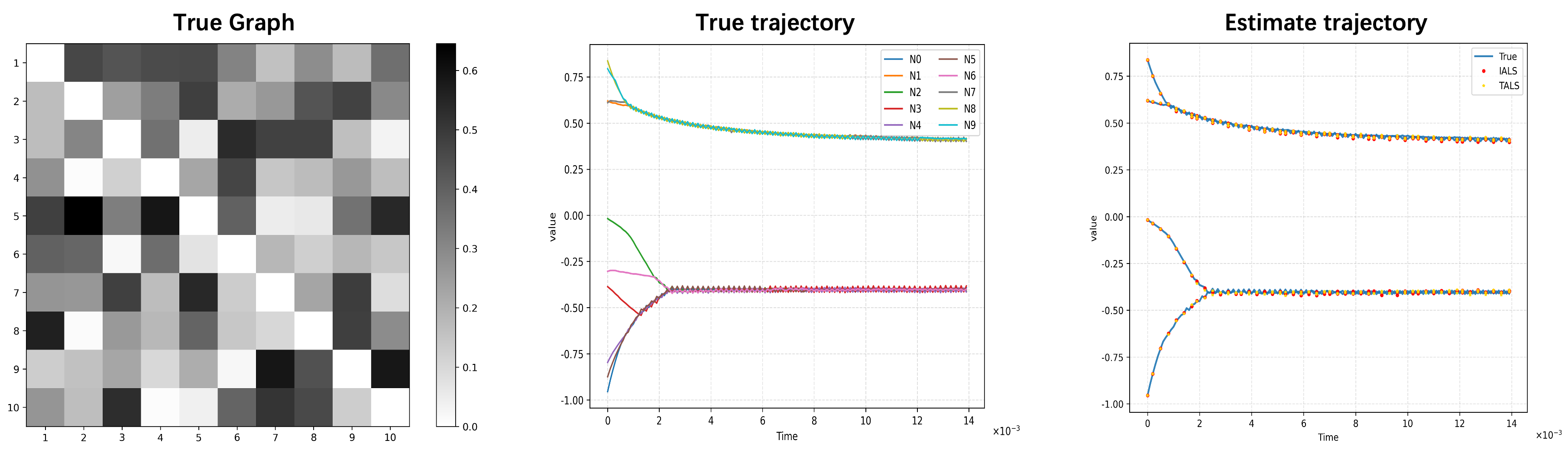}
    \caption{\textbf{System and trajectory reconstruction example.} Left: ground-truth interaction matrix $A_*$. Middle: true trajectories. Right: trajectory reconstructions produced by IALS and TALS.}
    \label{fig:figure1-1}
\end{figure}

\medskip
\noindent{\normalsize\bfseries{ Reconstruction example.}}

We begin with a representative example illustrating the reconstruction quality and convergence behavior of the two estimators. In this test, the initial distribution is $\mu_0=\mathrm{Unif}([-1,1])$, the training set consists of $M=500$ trajectories, the stochastic forcing intensity is $\sigma=10^{-3}$, and the observation-noise standard deviation is $\tau_{\mathrm{obs}}=10^{-3}$. The data are sampled with time step $\Delta t=10^{-4}$ over $L=300$ observation times, corresponding to the horizon $T=L\Delta t=0.03$.

Figure~\ref{fig:figure1-1} shows the ground-truth interaction matrix together with the true trajectories and the corresponding trajectory reconstructions produced by IALS and TALS. Both methods reproduce the trajectory evolution accurately in this representative run.

Figure~\ref{fig:figure1-2} further reports the recovered interaction kernel and local drift, the entrywise error heat maps of the interaction matrices estimated by IALS and TALS, and the convergence histories of the two methods. Both estimators recover the interaction structure and the dynamical components accurately. Computationally, however, IALS exhibits markedly faster convergence: within six iterations, it achieves accuracy comparable or superior to that of TALS after far more iterations. Since the two methods exhibit similar reconstruction quality here while IALS is markedly more computationally efficient, the subsequent sensitivity and sample-size studies focus primarily on IALS.

\begin{figure}[htbp]
    \centering
    \includegraphics[width=0.9\linewidth]{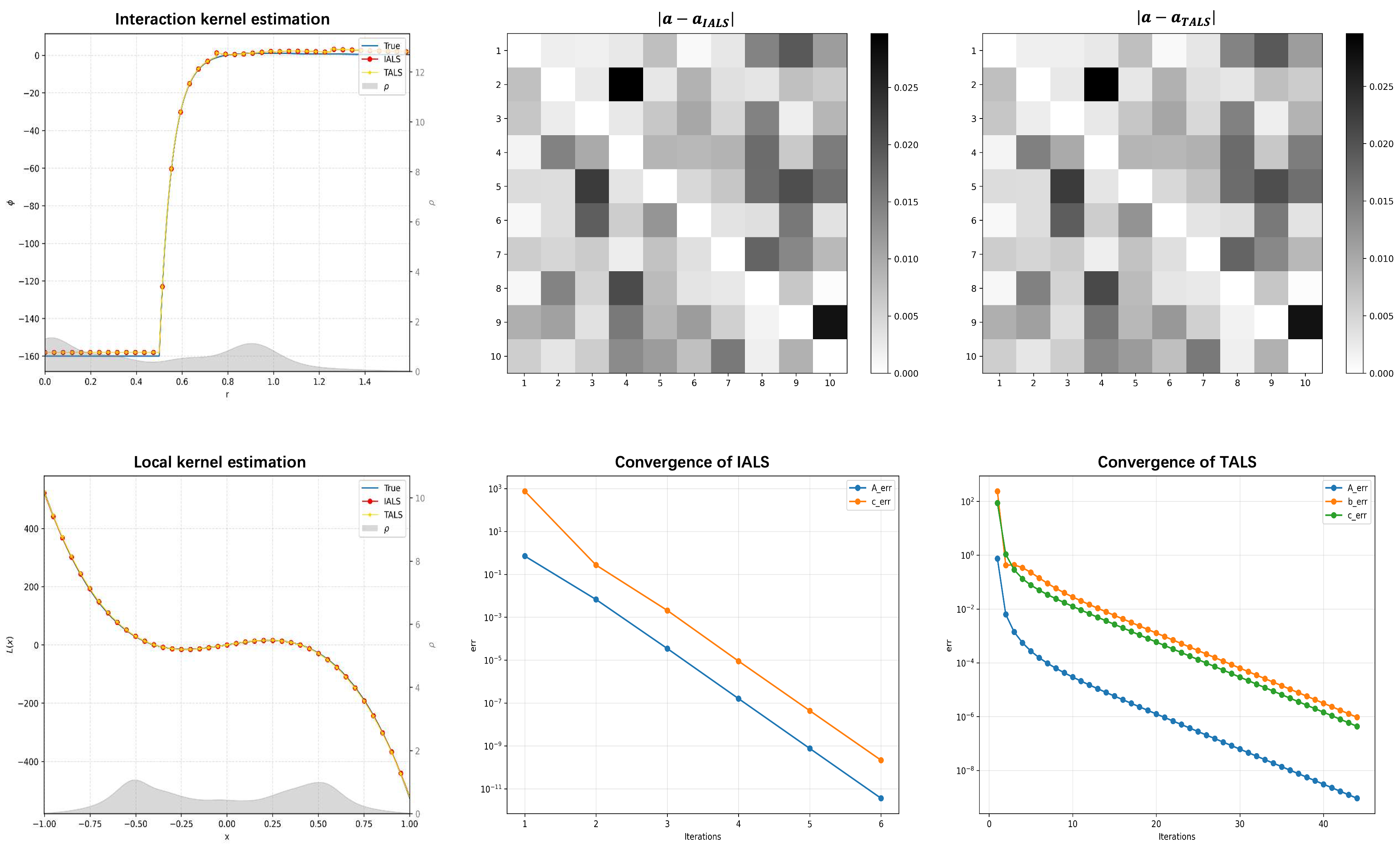}
    \caption{\textbf{Recovery of dynamical components, graph-estimation errors, and convergence histories.} Top left: recovery of the interaction kernel $\Phi$ by IALS and TALS. Top middle and top right: entrywise error heat maps of the interaction matrices estimated by IALS and TALS, respectively. Bottom left: recovery of the local drift $L$ by IALS and TALS. Bottom middle and bottom right: convergence histories of IALS and TALS, respectively.}
    \label{fig:figure1-2}
\end{figure}

\medskip
\noindent{\normalsize\bfseries{Convergence with respect to sample size.}}

We next investigate how the performance of IALS depends on the number of training trajectories $M$, and how this behavior is affected by observational noise and basis misspecification. To this end, we consider two settings.

\begin{enumerate}
    \item \textbf{Noiseless data with a well-specified basis.}
    We use noiseless observations together with the exact dictionaries $\{\psi_1,\psi_4,\psi_7\}$ and $\{\varphi_1,\varphi_3\}$. This setting is intended to examine whether the estimator exhibits the expected parametric convergence behavior as $M$ increases.

    \item \textbf{Noisy data with enlarged dictionaries.}
    We set $\tau_{\mathrm{obs}}=10^{-2}$ and use the enlarged dictionaries $\{\psi_k\}_{k=1}^{7}$ and $\{\varphi_k\}_{k=1}^{5}$. This setting is used to assess robustness with respect to noise and potential basis misspecification.
\end{enumerate}

\begin{figure}[!htbp]
    \centering
    \includegraphics[width=0.9\linewidth]{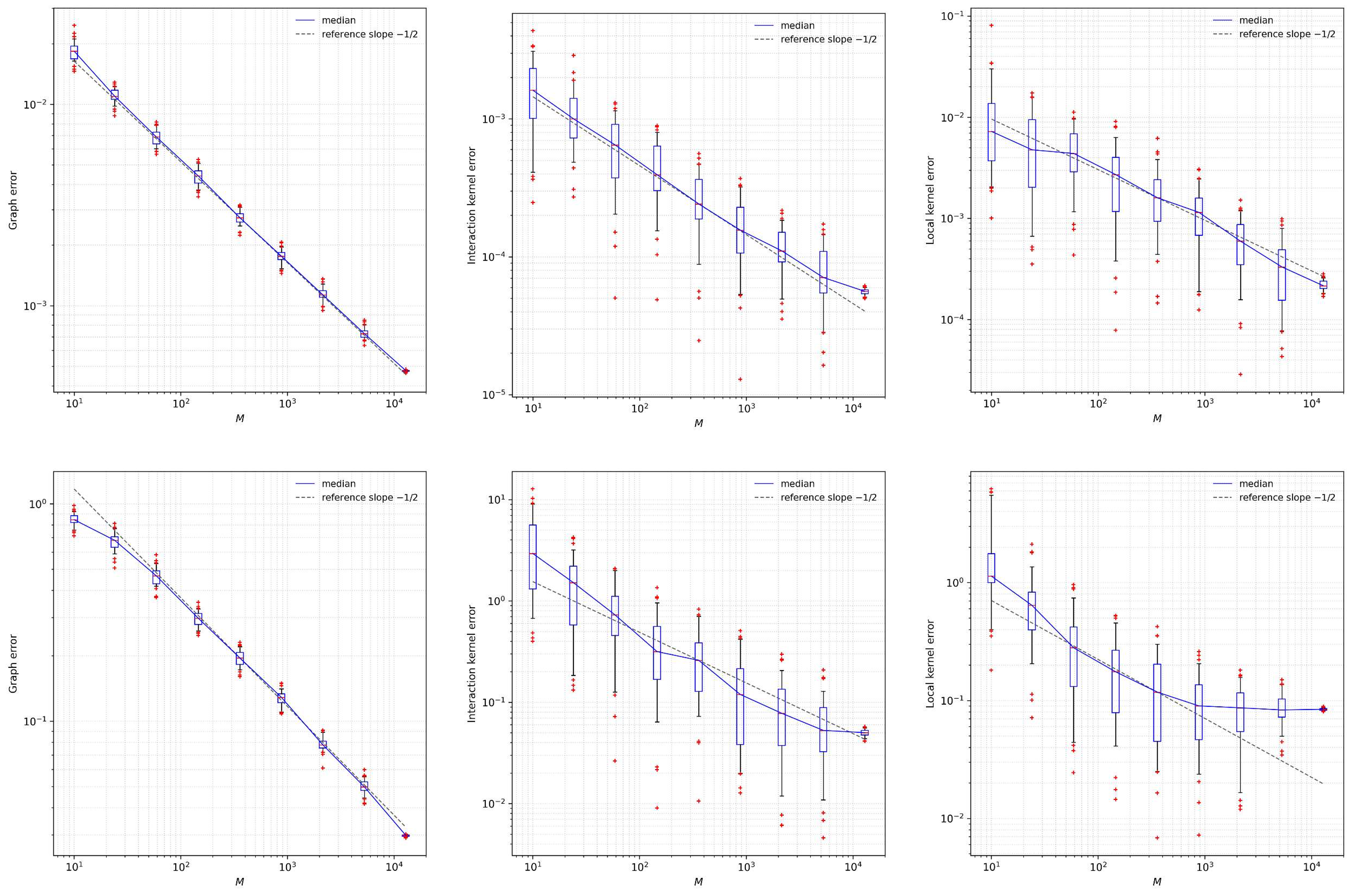}
    \caption{Convergence of IALS as the sample size $M$ increases over 50 independent runs. \textbf{Top row:} noiseless data with a well-specified basis; all three errors decay approximately at the parametric rate of $-1/2$ \textbf{Bottom row:} noisy data with enlarged dictionaries; the graph error continues to decay with $M$, whereas the interaction- and local-drift errors decrease more slowly and eventually plateau.}
    \label{fig:figure2}
\end{figure}

Figure~\ref{fig:figure2} summarizes the results over 50 independent runs. In the noiseless and well-specified setting (top row), all three errors decrease steadily as $M$ increases. On the log--log scale, the decay is close to linear, with slope near $-1/2$, which is consistent with the expected parametric rate in a correctly specified finite-dimensional setting.

In the noisy and enlarged-dictionary setting (bottom row), the graph error still decreases markedly with $M$, and its decay rate on the log--log scale remains approximately $-1/2$, consistent with the parametric rate observed in the noiseless setting. This indicates that the interaction structure remains identifiable from additional trajectories. By contrast, both kernel errors decrease more slowly and eventually level off, with visibly larger dispersion across repetitions. This suggests that, once the sample size is sufficiently large, kernel recovery becomes limited by a noise- and misspecification-dominated error floor, so that further increases in $M$ yield only marginal improvement. Overall, the results indicate that IALS retains good graph-recovery accuracy under moderate noise and basis enlargement, while the kernel errors are more sensitive to these perturbations.

\medskip
\noindent{\normalsize\bfseries{Robustness to stochastic forcing and observation noise.}}

We next examine the robustness of IALS with respect to stochastic forcing and observation noise by varying the corresponding noise levels over several orders of magnitude. Throughout this study, all results are summarized over 50 independent Monte Carlo runs.

We consider two complementary settings. In the first one, we vary the stochastic forcing intensity $\sigma$ while fixing the observation-noise standard deviation at $\tau_{\mathrm{obs}}=10^{-7}$, the sample size at $M=1000$, and the trajectory length at $L=100$. This setting is intended to isolate the effect of stochastic forcing on the estimator. In the second one, we set $\sigma=0$ and vary the observation-noise standard deviation $\tau_{\mathrm{obs}}$, again with $M=1000$, in order to assess the sensitivity of the estimator to measurement noise alone.

Figure~\ref{fig:robustness} reports the resulting graph, interaction-kernel, and local-drift errors. In the top row, all three errors decrease as the stochastic forcing intensity is reduced. For moderate to small values of $\sigma$, the decay is close to linear on the log--log scale, indicating an approximately first-order dependence on the forcing level over the resolved range. The graph error exhibits the clearest trend, while the two kernel errors show slightly larger variability, especially at very small forcing levels.

The bottom row shows a similar trend with respect to the observation-noise standard deviation. As $\tau_{\mathrm{obs}}$ decreases, all three errors decrease as well, with the graph error again displaying the most regular scaling. The interaction- and local-drift errors remain more dispersed across Monte Carlo runs, reflecting their higher sensitivity to finite-sample variability and noise. Overall, these experiments indicate that IALS is robust to both stochastic forcing and observation noise: graph recovery remains accurate across a wide range of noise levels, while kernel recovery degrades more gracefully and improves steadily as the noise is reduced.

\begin{figure}[!htbp]
    \centering
    \includegraphics[width=0.9\linewidth]{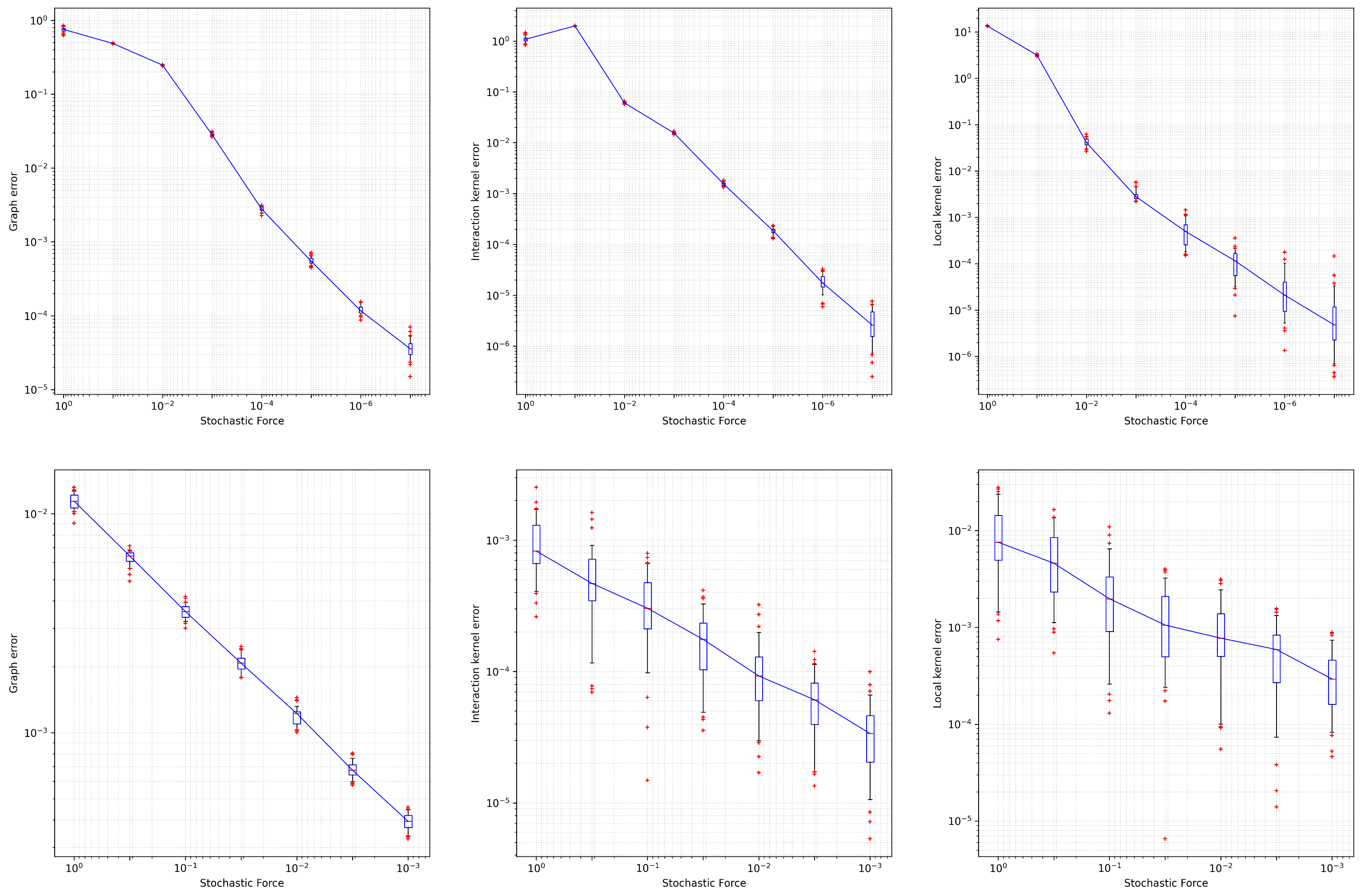}
    \caption{Robustness of IALS to stochastic forcing and observation noise over 50 independent runs. \textbf{Top row:} graph, interaction-kernel, and local-drift errors versus the stochastic forcing intensity $\sigma$, with fixed observation-noise standard deviation $\tau_{\mathrm{obs}}=10^{-7}$, sample size $M=1000$, and trajectory length $L=100$. \textbf{Bottom row:} the same three errors versus the observation-noise standard deviation $\tau_{\mathrm{obs}}$, with $\sigma=0$ and $M=1000$. In each panel, red markers denote individual trials and the blue curve connects the medians across noise levels.}
    \label{fig:robustness}
\end{figure}

\subsubsection{Comparison under model mismatch}

We next examine the role of the local term by comparing IALS with the standard ALS baseline in a controlled one-dimensional example. The purpose of this experiment is twofold: first, to assess whether explicitly modeling the local drift improves structural recovery when the true dynamics contain a local component; second, to verify that the additional flexibility of IALS does not introduce spurious local effects when the true dynamics are interaction-only.

The ground-truth model is specified by
\[
L_*(x)=-0.5x+0.5x^3,
\qquad
\Phi_*(r)=3\sin(r)-1.5\sin(3r).
\]
The initial states are sampled from $(-\pi,\pi)$ using a rejection scheme that favors points with larger drift magnitude, so as to avoid weakly informative initializations near regions where the drift is close to zero. Throughout this experiment, we set $\tau_{\mathrm{obs}}=0$ and fix the stochastic forcing intensity to $\sigma=10^{-5}$.

We consider two data-generation regimes. In the \emph{with-local} regime, the training data are generated from the full model containing both $L_*$ and $\Phi_*$. In the \emph{without-local} regime, the local term is removed and the data are generated from the interaction-only model. In each regime, we train both ALS and IALS on the corresponding data and compare the learned graph structure and the recovered dynamics.

Figure~\ref{fig:figure4} summarizes the comparison. In the \emph{with-local} regime, IALS recovers a graph that is consistent with the ground truth. By contrast, the learned graph from ALS is noticeably distorted, indicating that the local effect is absorbed into an incorrect interaction structure. In the \emph{without-local} regime, IALS continues to recover the correct dynamics and, importantly, drives the estimated local term to a quantity that is numerically close to zero. This shows that introducing the local component in the model does not compromise performance when the true system does not contain such a term.

Taken together, these results indicate that IALS remains stable across both regimes: it improves structural interpretability when a local term is present, while avoiding spurious local reconstruction when the true dynamics are interaction-only. In contrast, ALS is reliable only when the local term is absent; otherwise, it may recover an incorrect dynamical mechanism.

\begin{figure}[!htbp]
    \centering
    \includegraphics[width=0.9\linewidth]{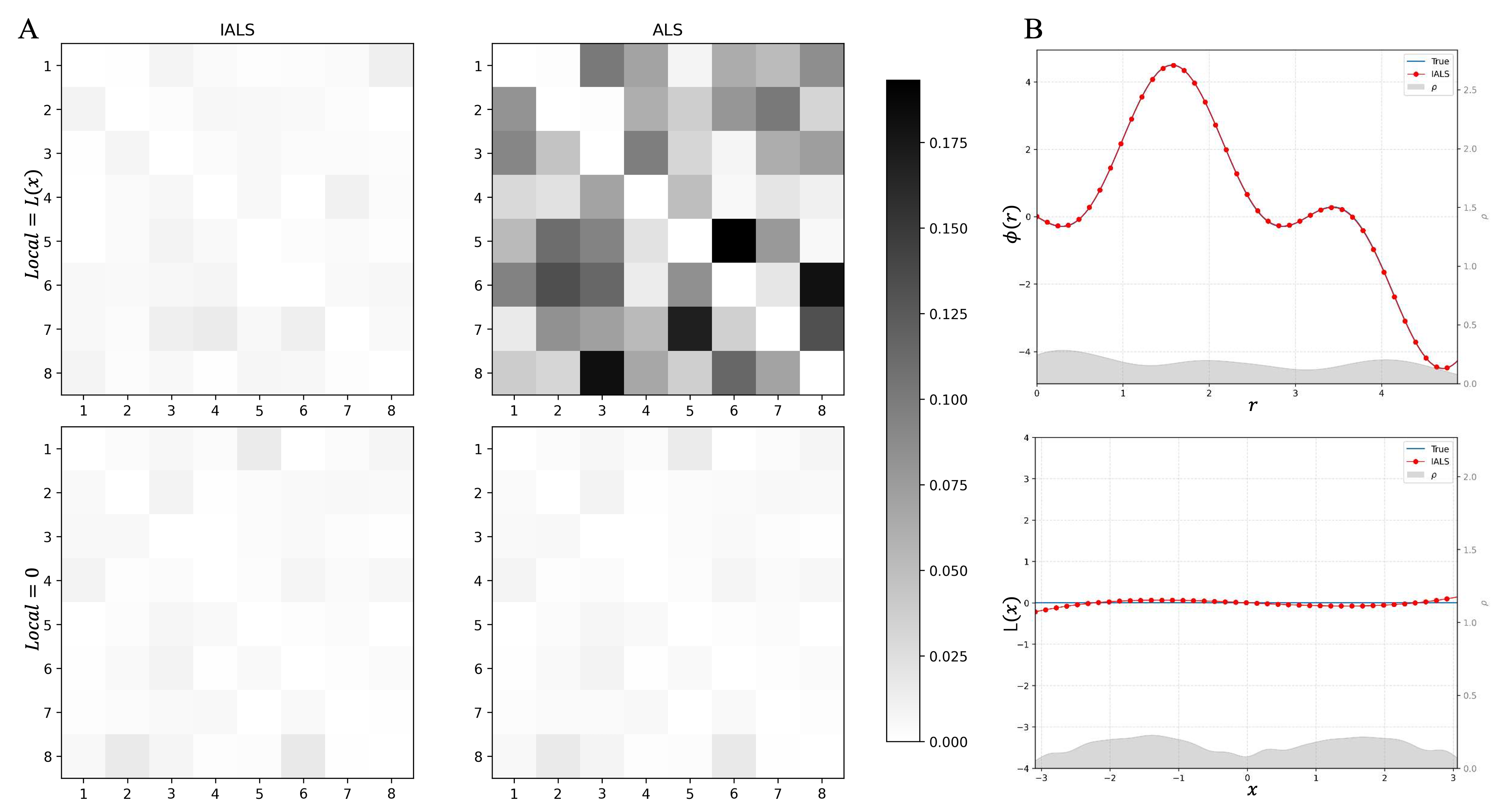}
    \caption{Comparison of graph and dynamics recovery between ALS and IALS with ($\text{Local=L(x)}$) and without ($\text{Local=0}$) a local term. \textbf{(A)} Entrywise error heatmaps of the recovered graph matrices for ALS and IALS under both regimes; ALS exhibits significantly larger graph recovery errors in the with-local regime. \textbf{(B)} Dynamics recovery by IALS in the without-local regime: (top) interaction kernel recovery; (bottom) local term recovery, which is driven numerically close to zero as expected. }
    \label{fig:figure4}
\end{figure}

\subsection{Application to ictal SEEG recordings}

We finally apply the proposed IALS framework to real ictal SEEG recordings in order to assess its behavior beyond controlled synthetic settings. Unlike the simulated experiments above, this application does not provide a ground-truth interaction graph or ground-truth kernels. Accordingly, the goal here is not exact recovery, but rather to examine whether the learned dynamics are internally consistent with the observed seizure trajectories, whether the resulting graph-derived functional summaries are clinically interpretable, and whether these summaries remain stable across different dictionary choices.

\subsubsection{Experimental setup}
We use ictal and interictal SEEG recordings from the ds4100 dataset organized in BIDS format. Channels are restricted to those labeled as \texttt{good}, and only channels shared by the interictal and ictal recordings are retained to ensure a consistent baseline normalization. Seizure onset and offset are identified from the event annotations, and for each seizure we analyze a padded window around the ictal episode. From the multichannel voltage traces, we extract high-frequency amplitude features by band-pass filtering in 80--150\,Hz, computing the Hilbert envelope, and forming a log-power feature, followed by temporal smoothing. To improve comparability across subjects, all feature trajectories are represented on a common sampling grid, and channel-wise z-score normalization is performed using interictal baseline statistics. The resulting multivariate feature trajectories are used as the state observations for dynamical identification. 

To assess robustness with respect to the functional parameterization, we consider four paired configurations for the local and interaction terms,
\[
(L,\Phi)\in\{(L_1,\Phi_1),(L_2,\Phi_2),(L_3,\Phi_3),(L_4,\Phi_4)\},
\]
with
\begin{itemize}
  \item $(L_1,\Phi_1)=(\texttt{tanh\_zscore\_5x2},\,\texttt{Polynomial+activation})$,
  \item $(L_2,\Phi_2)=(\texttt{5-Fourier},\,\texttt{5-Fourier})$,
  \item $(L_3,\Phi_3)=(\texttt{5-Polynomial},\,\texttt{5-Polynomial})$,
  \item $(L_4,\Phi_4)=(\texttt{LocalSmallStrong},\,\texttt{PhiPhysio})$.
\end{itemize}
For brevity, we refer to these four paired dictionary configurations simply as \texttt{Poly+Act}, \texttt{5-Fourier}, \texttt{5-Polynomial}, and \texttt{PhiPhysio}, respectively, using the label of the interaction dictionary as a shorthand for the full pair. For each patient and each dictionary pair, we fit a patient-specific dynamical model and obtain a learned interaction matrix together with the associated basis coefficients. The explicit functional forms and hyper-parameter settings of the dictionaries are given in ~\ref{app:dictionary_specifications}.

The application results are presented at several complementary levels. We first provide an illustrative single-subject visualization, including the learned interaction matrix together with the fitted local and interaction functions. We then examine predictive consistency both at the subject level, through node-wise prediction results, and at the cohort level, through cross-dictionary comparisons of predictive performance. Finally, we analyze SOZ-related directional-flow quantities and assess their robustness.

\subsubsection{Representative single-subject visualization}

We first present the fitted outputs for an illustrative patient, HUP188, to demonstrate the behavior of the proposed framework on real SEEG data; see Fig.~\ref{fig:figure5}. The model yields a learned channel-level interaction matrix together with the fitted interaction function and example node-specific local terms. In the IALS formulation, the diagonal entries of the learned matrix are interpreted as node-dependent scaling factors for the shared local term rather than as self-loops; their visible heterogeneity suggests that the local component is used in a nontrivial channel-dependent manner.

The learned interaction matrix is clearly nonuniform and approximately sparse, despite the absence of explicit sparsity regularization. In the context of epileptic network propagation, such a nonuniform and relatively selective pattern appears plausible, as focal seizures are increasingly understood as emerging from and spreading through epileptogenic networks rather than uniformly across all recorded regions\cite{bartolomei2017defining,Panzica2013IdentificationOT}.

The fitted local and interaction functions also remain nontrivial, indicating that the model does not collapse to a purely graph-based representation. Although no ground truth is available in this real-data setting, the learned functions appear qualitatively reasonable at a coarse-grained level. In particular, the interaction function suggests a state-difference-dependent nonlinear coupling pattern, which may be viewed as reflecting that the effective influence between channels varies with their activity mismatch rather than remaining constant. The local function, on the other hand, appears to provide a nonlinear regulating or restoring effect on channel activity, limiting excessively large deviations while allowing nontrivial local dynamics. Taken together, these fitted components are broadly compatible with the view that seizure evolution is shaped jointly by network-mediated recruitment and local nonlinear regulation. 
\begin{figure}[!htbp]
    \centering
    \includegraphics[width=0.9\linewidth]{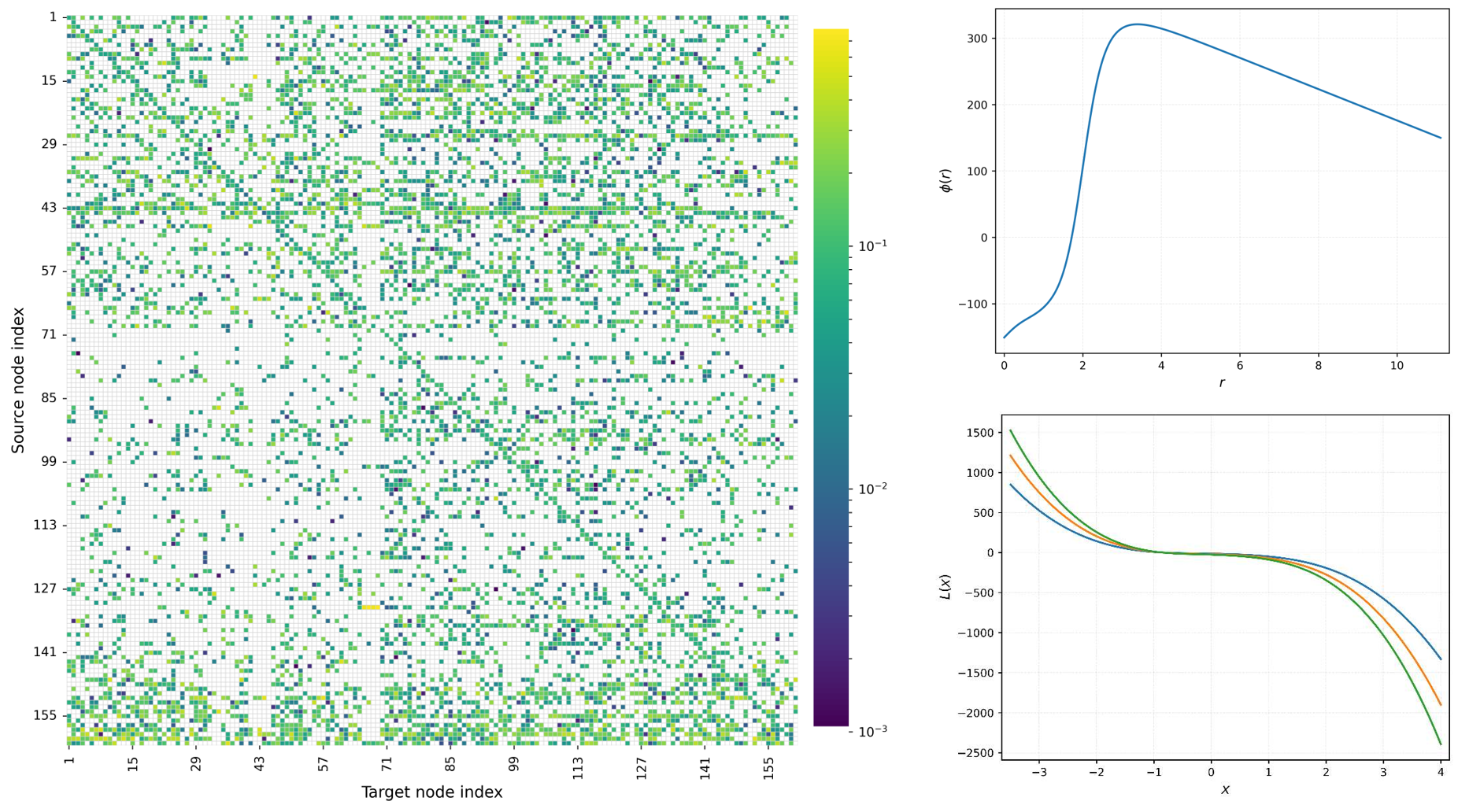}
    \caption{\textbf{Fitted outputs on real SEEG data for patient HUP188.}
Left: learned channel-level interaction matrix. Right: fitted interaction function (upper panel) and three example node-specific local terms (lower panel). In the IALS formulation, the diagonal entries of the learned matrix act as node-dependent scaling factors for the shared local term. Accordingly, the lower-right panel displays three example local terms induced by different diagonal scalings.}
    \label{fig:figure5}
\end{figure}

\subsubsection{Prediction diagnostics and cross-dictionary consistency}

To evaluate how well the learned dynamics remain locally consistent with the observed ictal trajectories, we examine prediction errors from three complementary perspectives: node-wise diagnostics for an illustrative patient, patient-level cross-dictionary comparisons based on mean node-wise errors, and variability across patients. At the cohort level, we summarize predictive performance using NRMSE$_{\mathrm{std}}$, NMAE$_{\mathrm{std}}$, and sMAPE (Table~\ref{tab:kernel_summary_compact}). Here NRMSE$_{\mathrm{std}}$ and NMAE$_{\mathrm{std}}$ denote the RMSE and MAE normalized by the empirical standard deviation of the pooled prediction targets, respectively, while sMAPE denotes the symmetric mean absolute percentage error. For the node-wise analysis, we additionally report a raw relative MAE and a $\tau$-stabilized relative MAE, denoted by RMAE$_{\mathrm{raw}}$ and RMAE$_\tau$, together with the node-wise MAE and prediction bias. The precise definitions of all metrics, including the choice of the stabilization threshold $\tau$, are provided in Appendix~\ref{app:metric_definitions}.

\begin{figure}[!htbp]
    \centering
    \includegraphics[height=0.8\textheight, keepaspectratio]{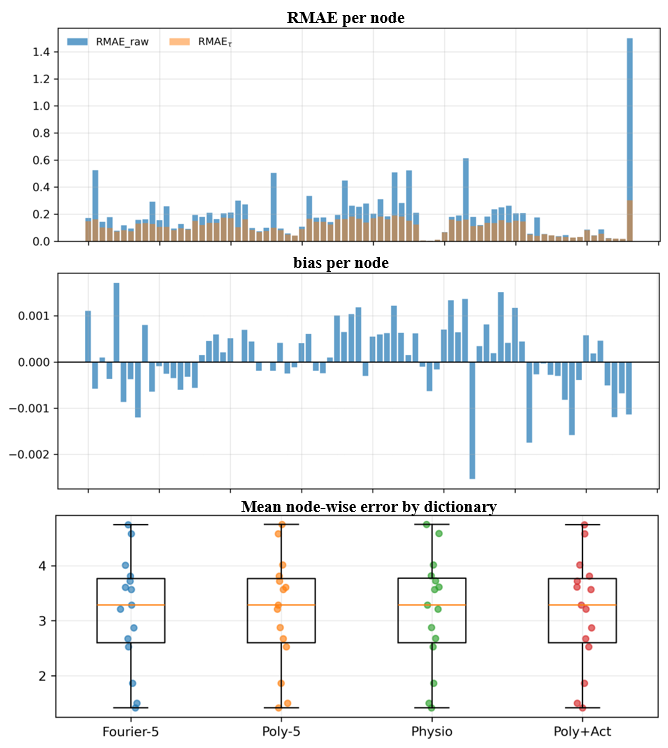}
    \caption{\textbf{Prediction diagnostics and cross-dictionary stability.}
    \textbf{Top:} node-wise relative prediction errors for an illustrative patient, including both RMAE$_{\mathrm{raw}}$ and the $\tau$-stabilized RMAE$_\tau$. \textbf{Middle:} node-wise prediction bias for the same patient. \textbf{Bottom:} patient-level boxplots of the average node-wise errors, where each point corresponds to the mean error over all nodes for one patient under a given dictionary configuration. The broadly similar distributions across dictionaries suggest that predictive performance remains largely stable with respect to dictionary choice.}
    \label{fig:figure6}
\end{figure}

Figure~\ref{fig:figure6} first reports the node-wise prediction errors and biases for an illustrative patient, highlighting that predictive performance varies across channels. It then summarizes, through boxplots, the distributions of patient-level mean node-wise errors under the four dictionary configurations. The resulting distributions are broadly similar across dictionaries, suggesting that predictive performance remains largely stable with respect to dictionary choice, whereas variability across patients is comparatively more pronounced.

The results in Table~\ref{tab:kernel_summary_compact} further support this observation. No single dictionary is uniformly preferred across patients, while the reported normalized errors remain at the percent level. Given the substantial artifact contamination, heterogeneity, and temporal variability of ictal SEEG recordings \cite{kharbouch2011algorithm,bush2022differentiation}, such error levels are still encouraging and indicate reasonably good local predictive consistency between the learned dynamics and the observed trajectories.

\begin{table}[!htbp]
\centering
\caption{\textbf{Cohort-level prediction errors across dictionary configurations.}
Values are reported in percentage form as median [25th, 75th percentile] across patients ($N=22$).}
\label{tab:kernel_summary_compact}
\small
\setlength{\tabcolsep}{5pt}
\begin{tabular}{lccc}
\toprule
Dictionary & NRMSE$_{\mathrm{std}}$ (\%) & NMAE$_{\mathrm{std}}$ (\%) & sMAPE (\%) \\
\midrule
5-Fourier     & 3.043 [1.734, 3.693] & 2.293 [1.319, 2.775] & 1.857 [0.952, 2.819] \\
5-Polynomial  & 3.045 [1.733, 3.694] & 2.295 [1.318, 2.776] & 1.856 [0.952, 2.821] \\
PhiPhysio     & 3.045 [1.732, 3.696] & 2.296 [1.318, 2.779] & 1.856 [0.952, 2.822] \\
Poly+Act      & 3.044 [1.732, 3.696] & 2.294 [1.318, 2.778] & 1.857 [0.952, 2.819] \\
\bottomrule
\end{tabular}
\end{table}

\subsubsection{SOZ-related directional flow and cross-dictionary robustness}

After establishing adequate local predictive consistency, we analyze SOZ--NA directional-flow summaries induced by the learned off-diagonal interaction matrix. For each node $v$, we define the net flow by
\[
\mathrm{net}(v)=\mathrm{in}(v)-\mathrm{out}(v),
\]
where, for an SOZ node, $\mathrm{in}(v)$ denotes the total incoming influence from NA nodes and $\mathrm{out}(v)$ denotes the total outgoing influence toward NA nodes; the definition is symmetric for NA nodes. Under this sign convention, $\mathrm{net}(v)<0$ indicates that the node exerts stronger outward influence across the SOZ--NA partition than it receives, and thus behaves as a more source-like node with respect to the opposite partition.

This quantity is of direct biological interest in the ictal setting. Under the sign convention adopted here, a more negative net flow indicates stronger outward influence across the SOZ--NA partition. Since seizure-onset regions are expected to drive activity toward non-SOZ tissue during ictal propagation, more negative net flow in SOZ is consistent with this biological expectation \cite{bartolomei2017defining,korzeniewska2014ictal,matarrese2023spike}. Therefore, SOZ--NA net flow provides a physiologically interpretable summary of the learned dynamics.

We first examine the sign and magnitude of these SOZ-related net-flow quantities at the cohort level. As summarized in Table~\ref{tab:soz_net_sign_summary} and visualized in Fig.~\ref{fig:figure7}, the SOZ net-flow quantities tend to be negative in a substantial proportion of subjects, indicating a systematic tendency toward stronger outward influence across the SOZ--NA cut. Although no ground-truth interaction graph is available in the real SEEG setting, this finding is consistent with the expected source-like role of SOZ during ictal propagation.

We then examine whether this biologically meaningful quantity is stable with respect to the choice of basis functions. Although the four dictionary configurations can produce visibly different interaction matrices, the induced per-node NA net-flow summaries remain considerably more consistent within each subject. Figure~\ref{fig:figure8} illustrates this phenomenon for an illustrative subject: the learned matrices differ appreciably at the edge-weight level, yet the per-node NA net-flow curves across dictionaries strongly overlap, and the pairwise scatter plots concentrate near the diagonal $y=x$. This indicates that the biological interpretation carried by the net-flow summary is not an artifact of a particular dictionary configuration.

Taken together, these results suggest that SOZ-related directional flow is both biologically meaningful and empirically robust. In particular, the tendency of SOZ nodes to appear more source-like is preserved across different basis choices, which strengthens the credibility of this interpretation and supports the use of net flow as a clinically relevant summary of the learned seizure dynamics.

\begin{table}[t]
\centering
\caption{Cohort-level summary of SOZ net-flow sign across dictionaries. Negative $\mathrm{net}$ indicates dominant outward influence across the SOZ--NA partition.}
\label{tab:soz_net_sign_summary}
\begin{tabular}{lcc}
\toprule
Metric & Count & Percentage \\
\midrule
Subjects & 22& -- \\
SOZ net $<0$ in $\ge 1$ dictionary & 16/22& 73\%\\
SOZ net $<0$ in all dictionaries & 15/22& 68\%\\
Subject-averaged SOZ net $<0$ (mean over dictionaries) & 16/22& 73\%\\
\bottomrule
\end{tabular}
\end{table}

\begin{figure}[!htbp]
    \centering   \includegraphics[width=0.9\linewidth,height=0.3\textheight]{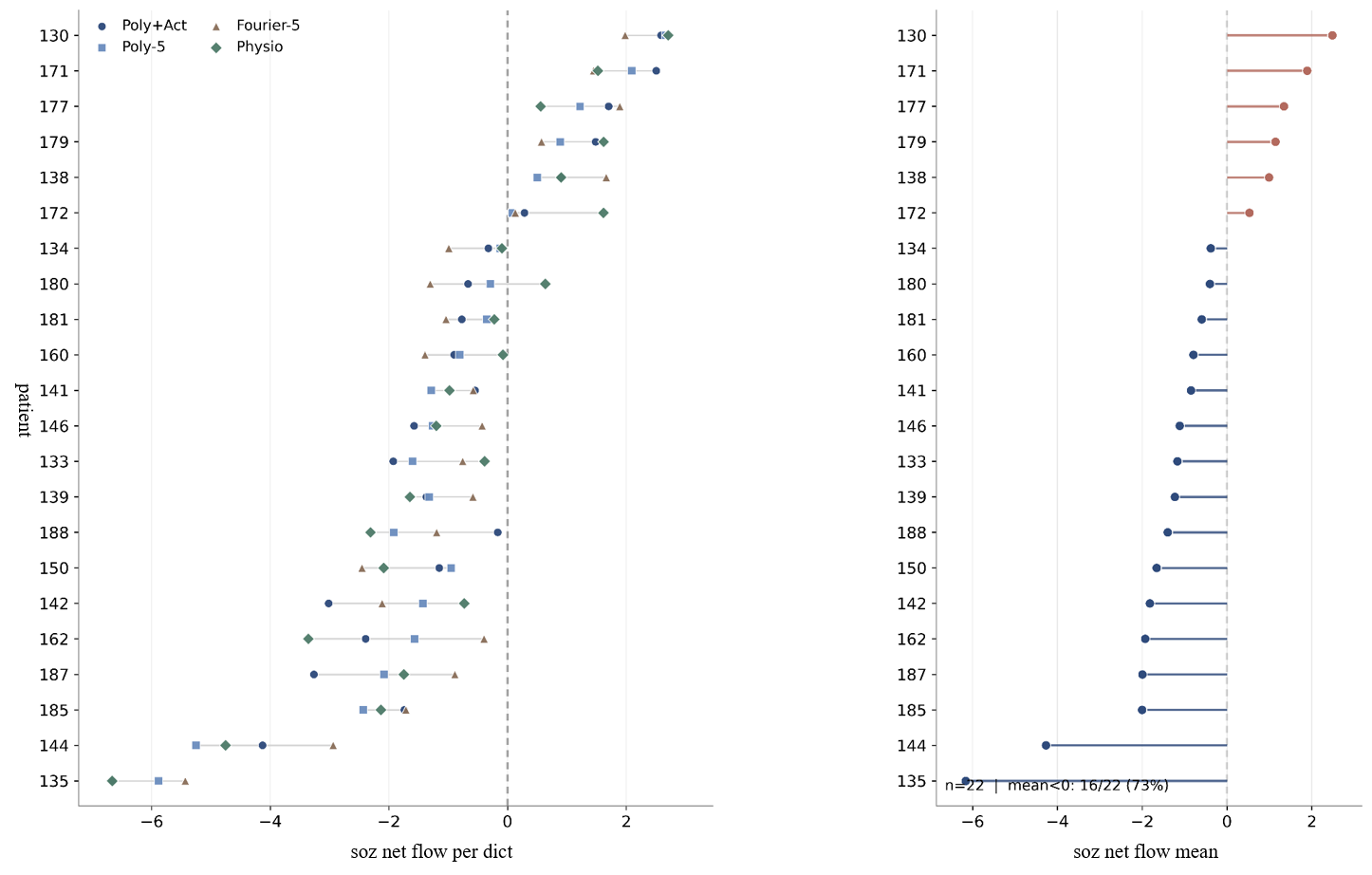}
   \caption{\textbf{SOZ-related net-flow summaries across subjects and dictionary configurations.}
For each subject, the SOZ net-flow quantity is computed from the learned interaction matrix as $\mathrm{net}(v)=\mathrm{in}(v)-\mathrm{out}(v)$, where negative values indicate a more source-like role across the SOZ--NA partition. \textbf{Left:} subject-level SOZ net flow under the four dictionary configurations; each row corresponds to one subject, and the four markers are connected to show the within-subject variation across dictionaries. \textbf{Right:} subject-level mean SOZ net flow obtained by averaging over the four dictionary configurations. Subjects are ordered by their mean SOZ net flow, from more negative to more positive values.}
    \label{fig:figure7}
\end{figure}

\begin{figure}[!htbp]
    \centering
    \includegraphics[width=0.9\linewidth]{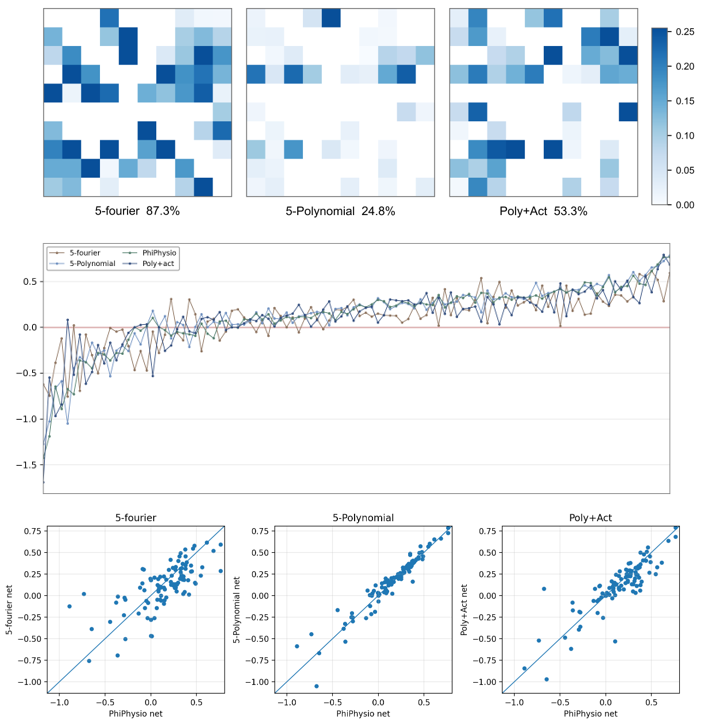}
    \caption{\textbf{Cross-dictionary variability of learned interaction matrices and stability of SOZ-related net-flow summaries (illustrative subject HUP185).}
\textbf{Top row:} absolute difference heat maps between the interaction matrix learned under \texttt{PhiPhysio} and those learned under the other three dictionary configurations (\texttt{5-Fourier}, \texttt{5-Polynomial}, and \texttt{Poly+Act}), shown on a representative 10-node submatrix for visualization. The heat maps are shown only on a representative 10-node submatrix for visual clarity, whereas the percentage reported below each panel is the relative Frobenius discrepancy of the full learned interaction matrix with respect to the \texttt{PhiPhysio} result.
\textbf{Middle row:} per-node NA net flow under the four dictionary configurations.
\textbf{Bottom row:} pairwise comparisons of per-node net flow between \texttt{PhiPhysio} and the other three dictionary configurations. Although the learned interaction matrices differ at the edge-weight level, the corresponding net-flow summaries remain much more consistent across dictionaries.}
    \label{fig:figure8}
\end{figure}

\clearpage

\section{Discussion}

The present framework is designed to balance interpretability, structural regularity, and computational tractability in the joint recovery of graph structure and governing dynamics from trajectory data. In particular, the use of structured basis expansions allows prior information on the local and interaction terms to be incorporated explicitly, while the alternating least-squares formulation keeps the resulting inverse problem computationally manageable. From this perspective, the proposed approach should be viewed as a controlled and interpretable inverse-learning framework rather than as a maximally expressive black-box model \cite{qin2019data,chen2021generalized}.

A brief remark is in order regarding the numerical focus on the scalar-state setting. At the level of both modeling and theory, the present framework is formulated for general states in $\mathbb{R}^d$. The current numerical study is nevertheless restricted to the case $d=1$ for two main reasons. First, this reflects a practical limitation of basis-expansion-based modeling: once the local and interaction terms depend genuinely on multivariate inputs, constructing basis representations that remain accurate, interpretable, and computationally stable becomes substantially more difficult. Second, the scalar setting is not merely a simplified toy case, since many interacting-particle models and learning formulations are themselves scalar, or reduce higher-dimensional interactions to one-dimensional response variables such as distances or norms \cite{lu2021learning,maggioni2021learning}. From this perspective, the present numerical design provides a controlled setting for examining the core inverse-learning problem while remaining consistent with an important class of relevant models.

\section{Future work}

One natural direction is to extend the present numerical study from the scalar-state setting to genuinely higher-dimensional states and to more adaptive function representations. Although the current framework already applies at the level of general $\mathbb{R}^d$-valued states, practical performance in higher dimensions will depend strongly on how the local and interaction functions are represented. A promising direction is therefore to combine the present graph-learning formulation with adaptive structured representations, such as sparsity-promoting basis selection in the spirit of sparse system identification \cite{brunton2016discovering}, low-complexity multivariate dictionaries, or partially learnable basis parameterizations motivated by KAN-like constructions \cite{liu2024kan}. Such extensions could improve expressive power while preserving a degree of interpretability and numerical control.

Another important direction is to relax the shared-mechanism assumption across nodes. In the current IALS formulation, the nodewise local dynamics are assumed to share a common functional template up to node-dependent scaling, and the interaction term is also represented through a shared basis family. While this setting is appropriate for many homogeneous or weakly heterogeneous systems and is useful for preserving identifiability and tractability, more general systems may exhibit genuine node-level heterogeneity in both local responses and interaction laws \cite{fiedler2025recent,patow2024whole,deco2021dynamical,perl2023impact}. Extending the present framework to such settings may require structured parameterizations based on low-rank decompositions, type-based representations, or shared-template-plus-deviation models \cite{argyriou2008convex}. Developing heterogeneous extensions that remain identifiable, stable, and computationally feasible is therefore an important direction for future research.

\section{Conclusion}

In this work, we studied the joint recovery of interaction structure and governing dynamics in stochastic interacting systems from discrete-time trajectory data. By representing both the local drift and the interaction kernel through structured basis expansions, we formulated the estimation problem as a coupled inverse problem for the graph and the dynamical coefficients. Within this framework, we developed two alternating least-squares-type estimators, namely TALS and the diagonal-augmented IALS scheme, the latter being particularly suitable for settings in which nodewise local dynamics share a common functional template up to node-dependent scaling.

At the theoretical level, we established an identifiability result for the diagonal-augmented formulation under a rank-2 joint coercivity condition together with an appropriate normalization convention. At the numerical level, synthetic experiments demonstrated accurate recovery of the interaction graph and dynamical components, favorable convergence with increasing sample size, and robustness to stochastic forcing, observation noise, and basis mismatch. A real-data application to ictal SEEG recordings further illustrated the practical relevance of the framework, showing stable learned dynamical summaries across multiple dictionary choices together with good one-step predictive consistency.

Overall, the proposed framework provides an interpretable, data-adaptive, and computationally tractable approach to learning stochastic interacting systems from trajectory observations. More broadly, it preserves a clear connection between functional parameterization, graph structure, and downstream scientific interpretation, and provides a basis for further extensions to higher-dimensional states and more heterogeneous dynamical mechanisms.

\clearpage
\bibliographystyle{elsarticle-num}
\bibliography{refs}

\clearpage
\appendix

\section{Proof of Theorem~\ref{thm:ials_identifiability}}
\label{app:proof_ials_identifiability}

\begin{proof}[Proof of Theorem~\ref{thm:ials_identifiability}]
Let $(\widetilde A,H,\Phi)\in \widetilde{\mathcal M}\times \mathcal H_{\mathrm{loc}}\times \mathcal H_{\mathrm{int}}$ satisfy
\[
\mathcal E_{L,\infty}(\widetilde A,H,\Phi)=0.
\]
For each $i\in[N]$, define
\[
\mathcal E_{L,\infty}^{(i)}(\widetilde A,H,\Phi)
:=
\frac{1}{L}\sum_{\ell=0}^{L-1}
\mathbb E\!\left[
\left\|
\widetilde a_{ii} H(X_{t_\ell}^i)
+\sum_{j\neq i}\widetilde a_{ij}\Phi(X_{t_\ell}^j)
-\widetilde a_{ii}^* H_*(X_{t_\ell}^i)
-\sum_{j\neq i}\widetilde a_{ij}^* \Phi_*(X_{t_\ell}^j)
\right\|_2^2
\right].
\]
Since $\mathcal E_{L,\infty}=\sum_{i=1}^N \mathcal E_{L,\infty}^{(i)}$, each $\mathcal E_{L,\infty}^{(i)}$ vanishes.

Write
\[
H=q_HH_*+H^\perp,
\qquad
\Phi=q_\Phi\Phi_*+\Phi^\perp,
\]
where
\[
q_H:=\frac{\langle H,H_*\rangle_{L^2(\rho_{\mathrm{loc}})}}{\|H_*\|_{L^2(\rho_{\mathrm{loc}})}^2},
\qquad
q_\Phi:=\frac{\langle \Phi,\Phi_*\rangle_{L^2(\rho_{\mathrm{int}})}}{\|\Phi_*\|_{L^2(\rho_{\mathrm{int}})}^2},
\]
and
\[
\langle H^\perp,H_*\rangle_{L^2(\rho_{\mathrm{loc}})}=0,
\qquad
\langle \Phi^\perp,\Phi_*\rangle_{L^2(\rho_{\mathrm{int}})}=0.
\]
Substituting these decompositions into $\mathcal E_{L,\infty}^{(i)}$ and applying \eqref{eq:rank2Condition} with
\[
(H_1,\Phi_1)=(H_*,\Phi_*),
\qquad
(H_2,\Phi_2)=(H^\perp,\Phi^\perp),
\]
yields
\begin{equation}
\label{eq:coercive_lower_bound_nodewise_short_appendix}
\begin{aligned}
\mathcal E_{L,\infty}^{(i)}(\widetilde A,H,\Phi)\ge c_{\mathcal H}\Big(
&|q_H\widetilde a_{ii}-\widetilde a_{ii}^*|^2\|H_*\|_{L^2(\rho_{\mathrm{loc}})}^2
+\|q_\Phi\widetilde a_{i,-i}-\widetilde a_{i,-i}^*\|_2^2\|\Phi_*\|_{L^2(\rho_{\mathrm{int}})}^2
\\
&+|\widetilde a_{ii}|^2\|H^\perp\|_{L^2(\rho_{\mathrm{loc}})}^2
+\|\widetilde a_{i,-i}\|_2^2\|\Phi^\perp\|_{L^2(\rho_{\mathrm{int}})}^2
\Big).
\end{aligned}
\end{equation}
Since the left-hand side is zero, all four terms on the right-hand side vanish. Hence
\[
q_H\widetilde a_{ii}=\widetilde a_{ii}^*,
\qquad
q_\Phi\widetilde a_{i,-i}=\widetilde a_{i,-i}^*,
\qquad \forall i\in[N].
\]

Moreover, the assumptions
\[
\widetilde a_{i_{\mathrm{loc}}i_{\mathrm{loc}}}^*>0,
\qquad
\|\widetilde a_{i_{\mathrm{int}},-i_{\mathrm{int}}}^*\|_2>0
\]
imply, by choosing $i=i_{\mathrm{loc}}$ and $i=i_{\mathrm{int}}$ in \eqref{eq:coercive_lower_bound_nodewise_short_appendix}, that
\[
\|H^\perp\|_{L^2(\rho_{\mathrm{loc}})}=0,
\qquad
\|\Phi^\perp\|_{L^2(\rho_{\mathrm{int}})}=0.
\]
Therefore,
\[
H=q_HH_*,
\qquad
\Phi=q_\Phi\Phi_*.
\]

It remains to determine the scaling factors $q_H$ and $q_\Phi$. Since
\[
q_H\widetilde a_{ii}=\widetilde a_{ii}^*,
\qquad \forall i\in[N],
\]
we have
\[
q_H\,d(\widetilde A)=d(\widetilde A_*).
\]
Because $\widetilde a_{i_{\mathrm{loc}}i_{\mathrm{loc}}}^*>0$, the vector $d(\widetilde A_*)$ is nonzero. Since both $\widetilde A$ and $\widetilde A_*$ belong to $\widetilde{\mathcal M}$, it follows that
\[
\|d(\widetilde A)\|_2=\|d(\widetilde A_*)\|_2=1.
\]
Taking norms in $q_H\,d(\widetilde A)=d(\widetilde A_*)$ gives
\[
|q_H|=1.
\]
Because both $d(\widetilde A)$ and $d(\widetilde A_*)$ are nonnegative and nonzero, the identity $q_H\,d(\widetilde A)=d(\widetilde A_*)$ implies $q_H>0$. Hence
\[
q_H=1.
\]

Similarly, for the interaction block,
\[
q_\Phi\widetilde a_{i,-i}=\widetilde a_{i,-i}^*,
\qquad \forall i\in[N].
\]
Choosing $i=i_{\mathrm{int}}$, we have $\widetilde a_{i_{\mathrm{int}},-i_{\mathrm{int}}}^*\neq0$, so both $\widetilde a_{i_{\mathrm{int}},-i_{\mathrm{int}}}$ and $\widetilde a_{i_{\mathrm{int}},-i_{\mathrm{int}}}^*$ are nonzero. By the definition of $\widetilde{\mathcal M}$,
\[
\|\widetilde a_{i_{\mathrm{int}},-i_{\mathrm{int}}}\|_2
=
\|\widetilde a_{i_{\mathrm{int}},-i_{\mathrm{int}}}^*\|_2
=1.
\]
Taking norms in
\[
q_\Phi\widetilde a_{i_{\mathrm{int}},-i_{\mathrm{int}}}
=
\widetilde a_{i_{\mathrm{int}},-i_{\mathrm{int}}}^*
\]
gives
\[
|q_\Phi|=1.
\]
Since both vectors are nonnegative and nonzero, this identity implies $q_\Phi>0$. Hence
\[
q_\Phi=1.
\]

Therefore,
\[
d(\widetilde A)=d(\widetilde A_*),
\qquad
\widetilde a_{i,-i}=\widetilde a_{i,-i}^* \quad \forall i\in[N],
\]
and thus
\[
\widetilde A=\widetilde A_*.
\]
Since $H=q_HH_*$ and $\Phi=q_\Phi\Phi_*$ with $q_H=q_\Phi=1$, we conclude that
\[
H=H_*,
\qquad
\Phi=\Phi_*.
\]
Hence
\[
(\widetilde A,H,\Phi)=(\widetilde A_*,H_*,\Phi_*).
\]
\end{proof}

\section{Dictionary specifications}
\label{app:dictionary_specifications}

For completeness, we list here the explicit basis dictionaries used in the experiments. Throughout, we write
\[
\operatorname{sig}(x):=\frac{1}{1+e^{-x}}
\]
for the logistic sigmoid function.

\paragraph{Polynomial+activation function}
The dictionary is
\[
\mathcal D_{\mathrm{Poly+Act}}
=
\left\{
1,\,
x,\,
x^3,\,
e^{-x^2},\,
\tanh(x)
\right\}.
\]

\paragraph{5-Polynomial}
The dictionary is
\[
\mathcal D_{\mathrm{Poly\text{-}5}}
=
\left\{
x,\,
x^2,\,
x^3,\,
x^4,\,
x^5
\right\}.
\]

\paragraph{5-Fourier}
The dictionary is
\[
\mathcal D_{\mathrm{Fourier\text{-}5}}
=
\left\{
1,\,
\sin(x),\,
\cos(x),\,
\cos(2x),\,
\sin(2x),\,
\sin(3x),\,
\cos(3x),\,
\sin(4x),\,
\sin(5x),\,
\cos(4x),\,
\cos(5x)
\right\}.
\]

\paragraph{LocalSmallStrong}
With the implementation setting $\beta_{\tanh}=1$, $\beta_{\mathrm{sig}}=1$, and $\theta=0$, the dictionary is
\[
\mathcal D_{\mathrm{LocalSmallStrong}}
=
\left\{
1,\,
x,\,
x^3,\,
\tanh(x),\,
\operatorname{sig}(x)
\right\}.
\]

\paragraph{PhiPhysio}
With the implementation setting $\beta_{\tanh}=1$, $\beta_{\mathrm{sig}}=1$, and $\theta=0$, the dictionary is
\[
\mathcal D_{\mathrm{PhiPhysio}}
=
\left\{
x,\,
\tanh(x),\,
\operatorname{sig}(x)-\frac12
\right\}.
\]

\paragraph{tanh-zscore-5x2}
This dictionary consists of a linear term together with ten shifted-and-scaled hyperbolic tangent functions:
\[
\mathcal D_{\mathrm{tanh\text{-}zscore\text{-}5\times 2}}
=
\{x\}
\cup
\left\{
\tanh\!\left(\frac{x-\theta}{s}\right)
:\;
\theta\in\{-2,-1,0,1,2\},\;
s\in\{0.5,1.0\}
\right\}.
\]
In the actual implementation, the basis functions are ordered by first fixing $s=0.5$ and then $s=1.0$; within each scale, the thresholds are ordered as $\theta=-2,-1,0,1,2$.

\section{Definitions of prediction error metrics}
\label{app:metric_definitions}

For one-step prediction, let $x_{i,\mathrm{true}}^{(k)}$ and $x_{i,\mathrm{pred}}^{(k)}$ denote the true and predicted next-step values, respectively, for sample index $k=1,\dots,K$ and node index $i=1,\dots,N$. We define the one-step prediction error by
\[
e_i^{(k)}:=x_{i,\mathrm{true}}^{(k)}-x_{i,\mathrm{pred}}^{(k)}.
\]
In the implementation, the sample index $k$ runs over all available one-step prediction instances obtained from the evaluation trajectories.

We first define the global mean absolute error, root mean squared error, and prediction bias by
\[
\mathrm{MAE}_{\mathrm{global}}
:=
\frac{1}{KN}\sum_{k=1}^K\sum_{i=1}^N |e_i^{(k)}|,
\qquad
\mathrm{RMSE}_{\mathrm{global}}
:=
\left(
\frac{1}{KN}\sum_{k=1}^K\sum_{i=1}^N |e_i^{(k)}|^2
\right)^{1/2},
\]
\[
\mathrm{Bias}_{\mathrm{global}}
:=
\frac{1}{KN}\sum_{k=1}^K\sum_{i=1}^N e_i^{(k)}.
\]

Let
\[
s_X:=\operatorname{std}\!\left(\{x_{i,\mathrm{true}}^{(k)}:\;1\le k\le K,\ 1\le i\le N\}\right)
\]
denote the empirical standard deviation of the pooled one-step targets. The normalized error metrics reported at the cohort level are then defined by
\[
\mathrm{NMAE}_{\mathrm{std}}
:=
\frac{\mathrm{MAE}_{\mathrm{global}}}{s_X+\varepsilon},
\qquad
\mathrm{NRMSE}_{\mathrm{std}}
:=
\frac{\mathrm{RMSE}_{\mathrm{global}}}{s_X+\varepsilon},
\]
where $\varepsilon=10^{-12}$ is a small numerical stabilizer. When reported in tables or figures as percentages, these quantities are multiplied by $100$.

At the sample-and-node level, we define the raw relative mean absolute error by
\[
\mathrm{RMAE}_{\mathrm{raw},i}^{(k)}
:=
\frac{|e_i^{(k)}|}{|x_{i,\mathrm{true}}^{(k)}|+\varepsilon}.
\]
Since this quantity can be artificially inflated when the target is close to zero, we also consider a $\tau$-stabilized version,
\[
\mathrm{RMAE}_{\tau,i}^{(k)}
:=
\frac{|e_i^{(k)}|}{\max\!\bigl(|x_{i,\mathrm{true}}^{(k)}|,\tau\bigr)+\varepsilon},
\]
with
\[
\tau:=10^{-3}(s_X+\varepsilon).
\]
The corresponding node-wise summaries are obtained by averaging over the sample index:
\[
\mathrm{RMAE}_{\mathrm{raw},i}
:=
\frac{1}{K}\sum_{k=1}^K \mathrm{RMAE}_{\mathrm{raw},i}^{(k)},
\qquad
\mathrm{RMAE}_{\tau,i}
:=
\frac{1}{K}\sum_{k=1}^K \mathrm{RMAE}_{\tau,i}^{(k)}.
\]

For comparison, we also compute the symmetric mean absolute percentage error at the sample-and-node level:
\[
\mathrm{sMAPE}_i^{(k)}
:=
\frac{|e_i^{(k)}|}
{\frac{1}{2}\bigl(|x_{i,\mathrm{true}}^{(k)}|+|x_{i,\mathrm{pred}}^{(k)}|\bigr)+\varepsilon}.
\]
Unless otherwise stated, the reported sMAPE values are obtained by averaging this quantity over all samples and nodes, and are shown in percentage form by multiplying by $100$.

In addition, for node-wise diagnostics we report
\[
\mathrm{MAE}_i
:=
\frac{1}{K}\sum_{k=1}^K |e_i^{(k)}|,
\qquad
\mathrm{Bias}_i
:=
\frac{1}{K}\sum_{k=1}^K e_i^{(k)}.
\]

To quantify the prevalence of near-zero targets, we define the near-zero target fraction by
\[
\mathrm{Frac}_{\mathrm{near\text{-}zero}}
:=
\frac{1}{KN}\sum_{k=1}^K\sum_{i=1}^N
\mathbf 1\!\left(|x_{i,\mathrm{true}}^{(k)}|<\tau\right).
\]
This quantity is used only as a diagnostic for interpreting relative errors in regimes where the target magnitude is very small.

\end{document}